\newcommand{\Conf}{\mathrm{Conf}}
\newcommand{\Aut}{\mathrm{Aut}}
\newcommand{\Diff}{\mathrm{Diff}}
\newcommand{\Homeo}{\mathrm{Homeo}}
\newcommand{\dom}{\mathrm{dom}}
\newcommand{\ran}{\mathrm{ran}}
\newcommand{\Stab}{\mathrm{Stab}}
\newtheorem{theorem}{Theorem}[section]
\newtheorem{lem}[theorem]{Lemma}
\newtheorem{prop}[theorem]{Proposition}
\newtheorem{ass}[theorem]{Assertion}
\theoremstyle{definition}
\newtheorem{definition}[theorem]{Definition}
\newtheorem{example}[theorem]{Example}
\newtheorem{def-prop}[theorem]{Definition-Proposition}
\theoremstyle{remark}
\newtheorem{remark}[theorem]{Remark}
\theoremstyle{plain}
\numberwithin{equation}{subsection}
\def\boxit#1#2{\setbox1=\hbox{\kern#1{#2}\kern#1}%
	\dimen1=\ht1 \advance\dimen1 by #1 \dimen2=\dp1
	\advance\dimen2 by #1
	\setbox1=\hbox{\vrule height\dimen1 depth\dimen2\box1\vrule}%
	\setbox1=\vbox{\hrule\box1\hrule}%
	\advance\dimen1 by .4pt \ht1=\dimen1 \advance\dimen2 by
	.4pt \dp1=\dimen2 \box1\relax}
\def\cM{{\mathcal M}}
\def\cN{{\mathcal N}}
\begin{document}

	\title{On the conformal group of a globally hyperbolic spacetime} 	

	\author{Ali Bleybel}

	\address{Faculty of Sciences (I), Lebanese University, Beirut, Lebanon}

	\begin{abstract}
		We study causal and conformal automorphism groups of globally hyperbolic spacetimes using an
		order-theoretic back-and-forth method on dense countable subsets. In two dimensions we show that
		any connected, globally hyperbolic spacetime with non-compact Cauchy surfaces that is directed is
		causally isomorphic to the Minkowski plane $\mathbb{M}^2$. Consequently, we obtain a partial
		classification of the causal and conformal automorphism groups of two-dimensional globally
		hyperbolic spacetimes, including the cases with compact Cauchy surfaces and non-directed causal
		order. The directed non-compact case is handled by refining the dense back-and-forth construction
		with the two intrinsic null orders, which record the two spacelike sides forgotten by bare causal
		incomparability. On the physics side, the resulting symmetry descriptions can
		be read as a factorized-versus-matched action of large reparametrization groups on null-type
		completion boundaries, illustrated by moving mirrors, conformal interfaces, and FLRW toy models.
	\end{abstract}

\maketitle

\section{Introduction}
\label{intro}
In General Relativity, conformal mappings are important because they preserve the causal structure (up to time orientation) and light-like geodesics up to parametrization.

Zeeman, \cite{Z}, showed that the group of causal automorphisms of Minkowski spacetime (of dimension $>2$) coincides with the group $G$ generated by orthochronous Lorentz transformations, space-time translations, and dilatations. Later, in \cite{L}, Levichev generalized these results to smooth time-orientable Lorentzian manifolds $M$ with dimensions $>2$. More precisely, he showed that any causal automorphism of $M$ is automatically conformal, provided that it is past and future distinguishing.

The causal and conformal automorphism groups of two-dimensional globally
hyperbolic spacetimes have been studied systematically by D.-H.~Kim
\cite{K,K2}. In particular, his results provide the foundational embedding
and subgroup framework for such groups, both in the non-compact Cauchy case
and in the compact case via the universal cover. The present article is
complementary to that framework. Its main purpose is to isolate an intrinsic
order-theoretic reconstruction mechanism in the directed non-compact case,
based on the two oriented null orders, and to use this mechanism to organize
the resulting automorphism and conformal symmetry statements in a form adapted
to dense-set reconstruction and boundary-data descriptions.

The paper addresses this problem using the back-and-forth method (reminiscent of model theory).
We obtain a partial result, namely for two-dimensional globally hyperbolic spacetimes. In particular, we isolate an order-theoretic condition satisfied by spacetimes causally isomorphic to two-dimensional Minkowski spacetime. These results might be of interest especially in the light of the celestial holography program \cite{CelestialCFT,StromingerReview}.

The point of the proof is not merely to rederive the smooth two-dimensional
classification. Rather, the argument isolates the order-theoretic data needed
for reconstruction: the two oriented null orders and their finite rectangles.
This formulation avoids a premature reliance on smooth null coordinates and
makes clear which parts of the proof are causal-order/topological in nature.
Although the present paper remains in the category of smooth globally
hyperbolic spacetimes, such a formulation is useful in view of recent synthetic
and low-regularity Lorentzian frameworks, such as Lorentzian length spaces,
where one seeks to formulate causality and reconstruction statements in terms
of causal and time-separation data rather than smooth charts
\cite{KunzingerSaemann}. We do not pursue such extensions here.

These questions are conceptually related to the asymptotic symmetry analysis of asymptotically flat spacetimes, where the Bondi--Metzner--Sachs (BMS) group acts at null infinity \cite{BondiMetzner,Sachs,StromingerReview}. Recent work in celestial holography proposes an equivalence between gravitational S-matrix elements and correlators in a conformal field theory living on this boundary \cite{CelestialCFT}. While our results are strictly $1+1$-dimensional and do not address higher-dimensional BMS geometry directly, they provide a simple order-theoretic toy model in which large reparametrization actions on null-type boundaries can be analyzed rigorously.

\paragraph{Contributions.}
\begin{enumerate}[1.]
	\item We formulate a back-and-forth criterion for producing causal isomorphisms from dense
	countable null-free subsets of globally hyperbolic spacetimes, and we record a concrete extension
	argument from a dense causal isomorphism to a spacetime causal isomorphism.
	\item In $1+1$ dimensions we construct two intrinsic null orders from the four-region decomposition
	of the complement of a null cone, and use finite null rectangles for the one-point extension step.
	\item As an application we give explicit directed-case identifications and
	group bounds for $\Aut(\mathcal M)$ and $\Conf(\mathcal M)$ in the compact
	and non-directed cases, organized by the corresponding boundary or
	universal-cover data.
\end{enumerate}
The rest of this paper is organized as follows. Section~2 outlines the background material and
fixes notation. Section~3 develops the back-and-forth framework on dense null-free subsets and
explains how a dense causal isomorphism extends to a spacetime causal isomorphism in the globally
hyperbolic setting. Section~4 applies this framework in dimension $1+1$: the two intrinsic null
orders provide the admissible finite maps and the one-point extension step, yielding a causal
isomorphism to $\mathbb{M}^2$ in the directed non-compact case and leading to a partial
classification of $\Aut(\mathcal M)$ and $\Conf(\mathcal M)$ in the remaining cases. Section~5 discusses consequences for symmetry actions
on null boundaries in $1+1$ dimensions. 

A recurring theme in the $1+1$-dimensional applications is that the causal/conformal symmetry descriptions have a direct boundary-symmetry interpretation. In the directed regime one obtains factorized reparametrization actions on the two null directions, whereas in the non-directed regime distinguished completion-boundary data force a matched action governed by a monotone boundary relation. Section~5 develops this viewpoint through moving mirrors, conformal interfaces, and cosmological toy models. Section~\ref{Examples} gives explicit compact examples, namely the Einstein static cylinder and global de Sitter space, and Section~\ref{conc} contains the conclusion.

\section{Preliminaries}
We assume the reader is familiar with elementary first-order logic and model theory.

\subsection{Order theory}
Let $(X, \prec)$ be a partially ordered set (\textit{poset}), with the order relation denoted by $\prec$. Recall that $\prec$, a binary relation, is a subset of the Cartesian product $X \times X$. Being an order relation, $\prec$ is a reflexive, transitive, and antisymmetric relation on $X$. If $A$ is a subset of $X$, the {\it induced order} on $A$ is the binary relation $\prec|_A := (A \times A) \cap \prec$. $\prec|_A$ is an order relation on $A$.

Given (partially) ordered sets $X, Y$, a partial isomorphism $X \dashrightarrow Y$ is an order isomorphism $A \to B$ with $A \subset X$, $B \subset Y$ equipped with the induced orders from $X$ and $Y$, respectively.

We say that $X$ is upward directed if, for all $x, y \in X$, there exists some $u \in X$ such that $x \prec u \,\wedge\, y \prec u$. Similarly, $X$ is downward directed if, for all $x, y \in X$, there exists some $v \in X$ such that $v \prec x \,\wedge\, v \prec y$.
A poset $(X, \prec)$ is said directed if it is both upward and downward directed.

The set $\{ x \in X \mid a \prec x \prec b\}$ will be denoted by $[a,b]$ and is called the Alexandrov interval with endpoints $a, b$.
For a spacetime $\cM$, an interval $[a,b]$ is also called a (closed) causal diamond.

By a \textit{pattern}, we mean a finite subset of $X$, equipped with a strict order relation.

\subsection{The setting}
We will be using the following signature
$$ L := \{ \twoheadrightarrow, \ll\}, $$
where $\twoheadrightarrow, \ll$ are two binary relation symbols.

\subsubsection{Theory}\label{axioms}
We consider the theory $T$, which consists of all statements that are logical consequences of the following axioms:
\begin{enumerate}
	\item $ \forall x \, ( x \centernot \ll x )$
	\item $ \forall x \forall y \forall z \; (( x \ll y \; \wedge \; y \ll z) \rightarrow x \ll z)$
	\item $ \forall x \, (x \twoheadrightarrow x) $
	\item $ \forall x \forall y \, ((x \twoheadrightarrow y \, \wedge \, y \twoheadrightarrow x) \rightarrow x=y) $
	\item $ \forall x \forall y \, \neg (x \ll y \, \wedge \, x \twoheadrightarrow y).  $
\end{enumerate}
And for each natural number $n > 0$, the following axiom:
{}\hspace{0.1cm}(6)$_n$
\begin{equation*}
	\forall x_1,\dots,x_n\left[\Big(\bigwedge_{i=1}^{n-1}(x_i\twoheadrightarrow x_{i+1})\wedge x_1\twoheadrightarrow x_n \Big)
	\longrightarrow \bigwedge_{1\leq j<k\leq n} (x_j\twoheadrightarrow x_k)\right].
\end{equation*}
The axiom schema $(6)_n$ expresses a known, standard fact about Lorentzian spacetimes: if consecutive points along a chain are null-related and the endpoints are also null-related, then all intermediate pairs must be null-related as well (null relatedness ``propagates'' along such chains).

\subsubsection{Types and quantifier-free types}
An $n$-\textit{type} $t = t(\bar{x})$ in $T$ is a set of formulas $\phi(\bar{x})$ that is realized in some model of $T$; i.e., there exists a model $M$ of $T$ and a tuple $\bar{a} \in M^n$ such that $M \models \phi(\bar{a})$ for every $\phi(\bar{x}) \in t$.
The case where $n=1$ corresponds to $1$-types.

We say that $t(\bar{x})$ is a quantifier-free type if all formulas in $t$ are quantifier-free.

For a tuple $\bar{a} \in M^n$, $n >0$, we denote by tp$(\bar{a})$ the set of all formulas $\phi(\bar{x})$ such that $\phi(\bar{a})$ holds in $M$. Let $X$ be a set of elements of $M$, and let $\bar{b}$ be a listing of the elements of $X$ (so $\bar{b}$ can be of infinite length). The \textit{type of $\bar{a}$ over $X$}, denoted by tp$(\bar{a}/\bar{b})$ (or, equivalently, tp$(\bar{a}/X)$), is the set of all formulas $\phi(\bar{x}, \bar{y})$ (where $\bar{y}$ is a tuple of variables), such that $\phi(\bar{a}, \bar{b}')$ holds in $M$. Here $\bar{b}'$ is some tuple of elements of $X$ of the same length as $\bar{y}$.

The \textit{quantifier-free type of $\bar{a}$ over $X$}, denoted by qftp$(\bar{a}/X)$, is the set of all quantifier-free formulas $\phi(\bar{x}, \bar{y})$ such that $\phi(\bar{a}, \bar{b}')$ holds in $M$.

\subsection{Corresponding notions in the context of spacetime physics}\label{spacetime-physics}
The causality relation in the context of spacetime physics is denoted by $\prec$: $ x \prec y$ iff $x \in J^-(y)$ where $J^-(y)$ is the past causal cone of $y$; note that $y \in J^-(y)$.
We will also make use of the relation $\ll$: $x \ll y$ (Chronology) iff $x \in I^-(y)$ ($x$ belongs to the \textit{open} past lightcone of $y$). For a globally hyperbolic spacetime, the relation $\ll$ is transitive and irreflexive; it follows that $\ll$ is antisymmetric. The relation $\ll$ is then a strict order.

Let $X \subset \cM$ be a finite set. Then $X$ is a \textit{strict pattern} if for all $x, y \in X$, $x \neq y$, $x \ll  y$ or $x \nprec y$. The point of this definition is to exclude non-open (null) relations; equivalently, a subset is a strict pattern precisely when it is null-free in the sense of Lemma~\ref{null-free}.
The statements $x \ll y$ and $I^-(y) \cap I^+(x) \neq \varnothing$ are equivalent. In particular, $x\ll y$ implies $I^+(y)\subsetneq I^+(x)$ and $I^-(x)\subsetneq I^-(y)$.

The expression $x \twoheadrightarrow y$ (used earlier in the context of the theory $T$) is defined here as $x \prec y \; \wedge \; x \centernot \ll y$.

We denote by $x^\perp$ the set of elements causally unrelated to $x$, i.e. $\{ y \in \cM \mid  x \centernot \prec y \,\wedge\, y \centernot \prec x\}$. The relation $ y \in x^\perp$ is also denoted by $ x \| y$.

\begin{ass}
Assume $a \twoheadrightarrow b$. Then, for any $c$, we have
\begin{eqnarray*}
	a \prec c \, \wedge \,  c \twoheadrightarrow b & \rightarrow &  a \twoheadrightarrow c   \\
	b \ll c  & \rightarrow & a \ll c  \\
	b || c & \rightarrow & c \nprec a
\end{eqnarray*}
\end{ass}

\begin{proof}[Proof sketch]
For the first claim: if $a \prec c$ and $c \twoheadrightarrow b$, then $a \prec c \prec b$. If $a \ll c$ held, then since $c \twoheadrightarrow b$ means $c \prec b$ and $c \centernot \ll b$, we would get $a \ll c \prec b$, hence $a \ll b$; but $a \twoheadrightarrow b$ gives $a \centernot \ll b$, so $a \centernot \ll c$, hence $a \twoheadrightarrow c$.
For the second: $a \twoheadrightarrow b$ gives $a \prec b$, and $b \ll c$; hence $a \prec b \ll c$ implies $a \ll c$ (since the composition of a causal relation followed by a chronological one is chronological).
For the third: if $c \prec a$, then $c \prec a \prec b$ (using $a \twoheadrightarrow b$), contradicting $b \| c$.
\end{proof}

\begin{remark}
The relation $x \prec y$ can be expressed as (modulo $T$):
$$ x \prec y  \equiv x \ll y \; \vee \; x \twoheadrightarrow y $$
where we used the following axiom (5):
$$ \neg (x \ll y \, \wedge \, x \twoheadrightarrow y).  $$
To see this, observe that $x \twoheadrightarrow y \leftrightarrow (x \prec y \, \wedge \, x \centernot \ll y)$, and the required equivalence follows.
\end{remark}

\begin{theorem}
Let $\mathcal{M}$ be a globally hyperbolic spacetime. Then $\mathcal{M}$ is a model of the theory $T$ in the language $L$.
\end{theorem}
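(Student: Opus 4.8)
The plan is simply to verify each axiom of $T$ against the causal structure of $\cM$, using the interpretations fixed in Subsection~\ref{spacetime-physics}: $x \ll y$ means $x \in I^-(y)$, $x \prec y$ means $x \in J^-(y)$ (so, by the stated convention, $y \in J^-(y)$ and $\prec$ is reflexive), and $x \twoheadrightarrow y$ abbreviates ``$x \prec y$ and $x \centernot\ll y$''. Three standard facts of Lorentzian causal theory will carry the whole argument: (a) $\prec$ is reflexive and transitive, by concatenation of causal curves; (b) the push-up property, valid on \emph{any} spacetime --- if $p \prec q \ll r$ or $p \ll q \prec r$ then $p \ll r$ --- of which transitivity of $\ll$ is a special case; (c) on a causal spacetime $\prec$ is antisymmetric and $\ll$ is irreflexive (no closed causal, resp.\ timelike, curves), and a globally hyperbolic spacetime is in particular causal. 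I would state (a)--(c) precisely at the outset, with pointers to the standard references (e.g.\ O'Neill, or Beem--Ehrlich--Easley), rather than reprove them.

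Granting (a)--(c), axioms (1)--(5) are essentially one-liners. Axiom (1) is the irreflexivity of $\ll$ from (c). Axiom (2) is the transitivity of $\ll$ from (b). For (3): $x \prec x$ by reflexivity of $\prec$ and $x \centernot\ll x$ by (1), hence $x \twoheadrightarrow x$. For (4): $x \twoheadrightarrow y$ and $y \twoheadrightarrow x$ force $x \prec y$ and $y \prec x$, so $x = y$ by antisymmetry of $\prec$. Axiom (5) holds because $x \twoheadrightarrow y$ is \emph{defined} to include the conjunct $x \centernot\ll y$, which directly contradicts $x \ll y$.

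The one axiom with any content is $(6)_n$. Assume $x_1 \twoheadrightarrow x_2 \twoheadrightarrow \dots \twoheadrightarrow x_n$ and also $x_1 \twoheadrightarrow x_n$, and fix $j \le k$ in $\{1,\dots,n\}$. Since each $\twoheadrightarrow$ implies $\prec$, transitivity together with reflexivity of $\prec$ (for the degenerate cases $j=1$, $k=n$, $j=k$) gives $x_1 \prec x_j$, $x_j \prec x_k$, and $x_k \prec x_n$; in particular $x_j \prec x_k$. Suppose, for contradiction, that $x_j \ll x_k$. Then push-up applied to $x_1 \prec x_j \ll x_k$ yields $x_1 \ll x_k$, and push-up applied to $x_1 \ll x_k \prec x_n$ yields $x_1 \ll x_n$ --- contradicting $x_1 \twoheadrightarrow x_n$, i.e.\ $x_1 \centernot\ll x_n$. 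Hence $x_j \centernot\ll x_k$, so $x_j \twoheadrightarrow x_k$; as $j \le k$ was arbitrary, $(6)_n$ follows.

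In short, the proof is bookkeeping, and the only ``obstacle'' is organizational: being explicit about which of (a)--(c) requires causality (axioms (1) and (4)) versus which holds on every spacetime (axioms (2), (3), (5), and $(6)_n$). Two small points deserve emphasis. First, no geometric ``deformation of broken causal curves'' argument is needed for $(6)_n$: the extra hypothesis $x_1 \twoheadrightarrow x_n$ is precisely what pins the configuration down, and the push-up lemma does the rest (one could, alternatively, invoke the second clause of the Assertion above for the relevant direction). Second, the truth of axiom (3) hinges on the convention $y \in J^-(y)$ that makes $\prec$ reflexive --- with the opposite convention one would need to restate it.
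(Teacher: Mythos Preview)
Your proposal is correct and follows the same approach the paper intends: the paper's own proof consists of a single sentence deferring the verification of axioms (1)--(5) and $(6)_n$ to the standard causal-space literature (specifically \cite{KP}), whereas you spell out the verification explicitly via the push-up property and the causality of globally hyperbolic spacetimes. Your treatment of $(6)_n$ --- using push-up twice to derive $x_1 \ll x_n$ from a hypothetical $x_j \ll x_k$ --- is exactly the content hidden behind the paper's citation, so there is nothing to correct.
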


\begin{proof}
It is a standard fact that the relations $\twoheadrightarrow$ and $\ll$ defined as in subsection \ref{spacetime-physics}, satisfy the axioms (1) through (5), and (6)$_n$ of \ref{axioms} for all $n>0$. For details, please see \cite{KP}.
\end{proof}

\subsection{Conformal transformations}
A conformal diffeomorphism $\phi: \mathcal{M} \to \mathcal{N}$ where $(\mathcal{M}, g_\mathcal{M})$ and $(\mathcal{N}, g_\mathcal{N})$ are two $n$-dimensional pseudo-Riemannian manifolds of signature $(p,q)$, $0 \leq p \leq q$, $p+q = n$, is a $C^\infty$-diffeomorphism
$\varphi: \mathcal{M} \to \mathcal{N}$ such that
$$\varphi^*(g_\mathcal{N}) = e^\psi g_\mathcal{M}, $$
for some arbitrary $C^\infty$-function $\psi: \mathcal{M} \to \mathbb{R}$.

\section{Causal isomorphisms of globally hyperbolic spacetimes}

\subsection{General considerations}
Let $\mathcal{C}, \mathcal{D}$ be models of $T$.
A causal isomorphism $ f: \mathcal{C} \to \mathcal{D}$ is
a bijective map satisfying
$$ \forall x, y \in \mathcal{C}, x \prec y   \leftrightarrow f(x) \prec f(y).$$
The above definition applies to globally hyperbolic spacetimes $\mathcal{M}, \mathcal{N}$.

\begin{lem}\label{null-free}
Let $\mathcal{M}$ be a globally hyperbolic spacetime. Then there exists a countable dense subset $C \subset \mathcal{M}$ such that for all distinct $p, q \in C$,
\[  \neg(p \twoheadrightarrow q) \wedge \neg( q \twoheadrightarrow p). \]
Equivalently, there are no null-related pairs in $C$.

In particular, for any $a\in \mathcal{M}$ there exists a countable dense null-free $C$ with $a\in C$.
\end{lem}

\begin{proof}
First let \((U_n)_{n\in\mathbb N}\) be a fixed countable basis of neighbourhoods of
\(\mathcal M\). We construct a sequence \((c_n)_{n\in\mathbb N}\) inductively. Suppose
\(c_0,\dots,c_{n-1}\) have been chosen. Put
\[
F_n=\bigcup_{m<n}\bigl(\partial I^+(c_m)\cup\partial I^-(c_m)\bigr)
      \cup\{c_0,\dots,c_{n-1}\}.
\]
Each set \(\partial I^\pm(c_m)\) is closed with empty interior: in a globally hyperbolic
spacetime it is achronal, and an achronal subset cannot contain a nonempty open set.
Since \(F_n\) is a finite union of such sets and of finitely many points, \(F_n\) has
empty interior. Hence \(U_n\setminus F_n\neq\varnothing\), and we may choose
\(c_n\in U_n\setminus F_n\).

Then \(C:=\{c_n:n\in\mathbb N\}\) meets every basis element, so it is dense. Moreover,
if \(m<n\), then \(c_n\notin\partial I^\pm(c_m)\); equivalently, \(c_m\) and \(c_n\)
are not null-related. Thus \(C\) is null-free.

Finally fix \(a\in\mathcal M\). Start the construction with \(c_0=a\), and enumerate
a countable basis as \((U_n)_{n\ge 1}\) after this initial choice. At the \(n\)-th
stage, \(n\ge1\), use the same avoidance set
\[
F_n=\bigcup_{m<n}\bigl(\partial I^+(c_m)\cup\partial I^-(c_m)\bigr)
      \cup\{c_0,\dots,c_{n-1}\}.
\]
The same argument gives \(c_n\in U_n\setminus F_n\). The resulting countable dense
null-free set contains \(a\) by construction.
\end{proof}

\begin{definition}[Back-and-forth system]\label{def:bf-system}
Let $C\subset\mathcal M$ and $D\subset\mathcal N$ be sets equipped with the induced relations
$\ll$ and $\perp$ (causal incomparability). A \emph{back-and-forth system} between $C$ and $D$ is a
nonempty family $\mathscr P$ of finite partial isomorphisms $p:A\to B$ (with $A\subset C$, $B\subset D$)
such that:
\begin{enumerate}[(i)]
	\item (\emph{Forth}) for every $p\in\mathscr P$ and every $c\in C$ there exists $q\in\mathscr P$
	extending $p$ with $c\in\dom(q)$;
	\item (\emph{Back}) for every $p\in\mathscr P$ and every $d\in D$ there exists $q\in\mathscr P$
	extending $p$ with $d\in\ran(q)$.
\end{enumerate}
\end{definition}

Let $a, b$ be points in  $\mathcal{M}, \mathcal{N}$ respectively.

\begin{theorem}[Back-and-forth criterion]\label{main-0}
Let $\mathcal M$ and $\mathcal N$ be globally hyperbolic spacetimes and fix $a\in\mathcal M$ and
$b\in\mathcal N$. Then there exists a causal isomorphism $\Phi:\mathcal M\to\mathcal N$ with
$\Phi(a)=b$ if and only if there exist countable dense null-free sets $C\subset\mathcal M$ and
$D\subset\mathcal N$ with $a\in C$, $b\in D$, and a back-and-forth system $\mathscr P$ between $C$
and $D$ (Definition~\ref{def:bf-system}) containing the map $\{a\}\to\{b\}$.
\end{theorem}

\begin{proof}

\noindent(\(\Rightarrow\))
Let \(\Phi:\mathcal M\to\mathcal N\) be a causal isomorphism with
\(\Phi(a)=b\). Choose a countable dense null-free set
\(C\subset\mathcal M\) with \(a\in C\) by Lemma~\ref{null-free}, and put
\[
D:=\Phi(C).
\]
Then \(D\) is countable and dense in \(\mathcal N\), contains \(b\), and is
null-free, since \(\Phi\) preserves the causal order. Let \(\mathscr P\) be the
family of all finite restrictions \(\Phi|_A\), where \(A\subset C\) is finite.
Then \(\mathscr P\) is a back-and-forth system: if \(p=\Phi|_A\) and \(c\in C\),
 then \(\Phi|_{A\cup \{c\}}\) gives the forth extension; if \(d\in D=\Phi(C)\),
write \(d=\Phi(c)\) and use the same extension for the back step. Moreover,
\(\mathscr P\) contains the map \(\{a\}\to\{b\}\).

\medskip
($\Leftarrow$) Assume we have countable dense null-free sets $C\subset\mathcal M$ and
$D\subset\mathcal N$ and a back-and-forth system $\mathscr P$ containing $\{a\}\to\{b\}$.
Fix enumerations
\[
C=\{c_0,c_1,c_2,\dots\},\qquad D=\{d_0,d_1,d_2,\dots\},
\]
with $c_0=a$ and $d_0=b$.

Using the back-and-forth property, we build an increasing chain
\[
p_0\subset p_1\subset p_2\subset\cdots
\]
with each $p_n\in\mathscr P$ such that for all $m$:
\[
c_m\in\dom(p_{2m})\quad\text{and}\quad d_m\in\ran(p_{2m+1}).
\]
Start with $p_0:\{a\}\to\{b\}$. Given $p_{2m}\in\mathscr P$, apply the \emph{back} clause to extend to
$p_{2m+1}\in\mathscr P$ with $d_m\in\ran(p_{2m+1})$. Given $p_{2m+1}\in\mathscr P$, apply the
\emph{forth} clause to extend to $p_{2m+2}\in\mathscr P$ with $c_{m+1}\in\dom(p_{2m+2})$.

Let \(p:=\bigcup_{n\ge 0}p_n\). Then \(p:C\to D\) is a well-defined bijection.
It preserves and reflects the chronological relation \(\ll\), because each
finite stage \(p_n\) does. Since \(C\) and \(D\) are null-free, the causal
relation on each of them is determined, for distinct points, by the chronological
relation; equality is preserved by injectivity. Hence \(p\) is a causal
isomorphism between \(C\) and \(D\).

Finally, since \(\mathcal M\) and \(\mathcal N\) are globally hyperbolic and
\(C,D\) are dense null-free subsets, the extension lemma proved below
(Lemma~\ref{lem:Phi}) extends \(p\) uniquely to a causal isomorphism
\[
\Phi:\mathcal M\longrightarrow\mathcal N
\]
with \(\Phi(a)=b\).
\end{proof}

Before proceeding, let us recall a standard result:
\begin{lem}
In a globally hyperbolic spacetime, the manifold topology coincides with the Alexandrov topology generated by sets of the form $I^{+}(p) \cap I^{-}(q)$.
\end{lem}

\begin{proof}
See Hawking et al.\ \cite{H}, particularly Section~3; for the related fact that the chronological order determines the manifold topology, see Malament \cite{M}.
\end{proof}

\begin{lem}\label{lem:Phi}
Let \(\cM,\cN\) be globally hyperbolic spacetimes, and let
\(C\subset\cM\), \(D\subset\cN\) be countable dense null-free subsets. Let
\(p:C\to D\) be a bijection such that, for all \(c_1,c_2\in C\),
\[
c_1\ll_{\cM} c_2
\quad\Longleftrightarrow\quad
p(c_1)\ll_{\cN}p(c_2).
\]
Then \(p\) extends uniquely to a causal isomorphism
\[
\Phi:\cM\longrightarrow \cN .
\]
\end{lem}

\begin{proof}
	Since \(C\) and \(D\) are null-free, the chronological and causal relations
	agree on distinct points of each set. Hence \(p\) is an isomorphism of the
    induced chronological orders \((C,\ll)\cong(D,\ll)\).
    	
    By the reconstruction theorem of Martin--Panangaden
    \cite[Theorem~7.2]{MP}, if \(C\) is a countable dense subset of a globally
    hyperbolic spacetime, equipped with the restricted chronological relation
    \(\ll=I^+\), then the spacetime is recovered, with its topology and causal
    order, as the space \(\max(I_C)\) of maximal elements of the ideal completion
    of the interval abstract basis
    \[
    \operatorname{int}(C)= \{(a,b)\in C^2 : a\ll b \}.
    \]
    The construction depends only on the induced chronological order on \(C\).
	
	The order isomorphism \(p:(C,\ll)\to(D,\ll)\) therefore induces an isomorphism
	of interval abstract bases
	\[
	\operatorname{int}(C)\cong \operatorname{int}(D),
	\]
	hence an isomorphism of their ideal completions
	\[
	I_C\cong I_D,
	\]
	and consequently a homeomorphism
	\[
	\Phi:\mathcal M\cong\max(I_C)\longrightarrow \max(I_D)\cong\mathcal N
	\]
	preserving the causal order. By construction, \(\Phi|_C=p\).
		
	Uniqueness follows from density: any causal isomorphism extending \(p\) agrees
	with \(\Phi\) on the dense set \(C\), and \(\mathcal N\) is Hausdorff.
\end{proof}

\section{Partial classification of \texorpdfstring{$\Aut(\mathcal{M})$}{Aut(M)} for a two-dimensional globally hyperbolic spacetime}
This section considers the case of a two-dimensional globally hyperbolic spacetime.

We have:
\begin{theorem}\label{main}
Let $\mathcal{M}$ be a connected two-dimensional globally hyperbolic spacetime having non-compact Cauchy surfaces.

Assume furthermore that $\mathcal{M}$ is directed.
Then a causal isomorphism exists $i: \mathcal{M} \to \mathbb{M}^2$.
Consequently,
$$\Aut(\mathcal{M}) \simeq \Aut(\mathbb{M}^2), $$
(via conjugation by $i$).
\end{theorem}

\subsection{Intrinsic null orders and the corrected back-and-forth setup}

Throughout this subsection, let $\mathcal M$ be a connected two-dimensional globally
hyperbolic spacetime with noncompact Cauchy surfaces, and assume that $\mathcal M$
is directed. Fix a time orientation and a smooth Cauchy temporal function
\[
\tau:\mathcal M\longrightarrow \mathbb R,
\qquad
\Sigma_s=\tau^{-1}(s),
\]
so that each $\Sigma_s$ is a smooth Cauchy line. We also fix an orientation of the
Cauchy lines, transported along a Bernal--S\'anchez splitting~\cite{BernalSanchez,BernalSanchez2}
$\mathcal M\simeq\mathbb R\times\Sigma_0$.

For $x\in\mathcal M$, write
\[
\mathcal N(x)=
\bigl(J^+(x)\setminus I^+(x)\bigr)
\cup
\bigl(J^-(x)\setminus I^-(x)\bigr)
\]
for the full null cone through $x$.

We shall use the two-dimensional null cone only in the following intrinsic form.
At each point there are two future null half-lines. The chosen orientation of the
Cauchy lines labels the corresponding null line fields by $+$ and $-$. Let
$\mathcal L_+$ and $\mathcal L_-$ denote the two families of maximal inextendible
null curves tangent to these line fields. Each leaf of either family meets every
Cauchy line exactly once, and hence each leaf is ordered by its intersection with
any fixed oriented Cauchy line.

\begin{lem}[Four-region decomposition]\label{lem:four-regions}
For every $x\in\mathcal M$, the complement
\[
\mathcal M\setminus \mathcal N(x)
\]
has exactly four open connected components:
\[
I^+(x),\qquad I^-(x),\qquad S_R(x),\qquad S_L(x).
\]
Here $S_R(x)$ and $S_L(x)$ are the two spacelike components, labelled by the
chosen orientation of the Cauchy lines.
\end{lem}

\begin{proof}
	Let $\tau:\mathcal M\to\mathbb R$ be the fixed Cauchy temporal function and
	write $\Sigma_s=\tau^{-1}(s)$. Using the chosen orientation of the Cauchy
	lines, identify the splitting with
	\[
	\mathcal M\simeq \mathbb R_s\times \mathbb R_r,
	\qquad
	\Sigma_s\simeq \{s\}\times\mathbb R.
	\]
	Let $\tau(x)=s_0$.
	
	For $s>s_0$, global hyperbolicity implies that
	\[
	J^+(x)\cap\Sigma_s
	\]
	is a compact interval in the Cauchy line $\Sigma_s$. Its two endpoints lie on
	the two future null generators from $x$. For $s<s_0$, the analogous statement
	holds for
	\[
	J^-(x)\cap\Sigma_s.
	\]
	At $s=s_0$, the null cone meets $\Sigma_{s_0}$ only at $x$.
	
	Thus the full null cone through $x$ is represented, in the Cauchy splitting, by
	two continuous graph curves
	\[
	r=\ell_x(s),\qquad r=r_x(s),
	\]
	with
	\[
	\ell_x(s)\leq r_x(s),
	\]
	and equality only at $s=s_0$. The continuity of $\ell_x$ and $r_x$ follows from
	the continuous dependence of the null generators, equivalently from the
	continuity of the endpoints of the compact intervals
	$J^\pm(x)\cap\Sigma_s$.
	
	Now define
	\[
	S_L(x)=\{(s,r):r<\ell_x(s)\},
	\qquad
	S_R(x)=\{(s,r):r>r_x(s)\}.
	\]
	These sets are open. Moreover,
	\[
	(s,r)\longmapsto (s,r-\ell_x(s))
	\]
	identifies $S_L(x)$ homeomorphically with
	\[
	\mathbb R\times(-\infty,0),
	\]
	and
	\[
	(s,r)\longmapsto (s,r-r_x(s))
	\]
	identifies $S_R(x)$ homeomorphically with
	\[
	\mathbb R\times(0,\infty).
	\]
	Hence $S_L(x)$ and $S_R(x)$ are connected.
	
	The remaining two components are the middle intervals:
	\[
	I^+(x)=\{(s,r):s>s_0,\ \ell_x(s)<r<r_x(s)\},
	\]
	and
	\[
	I^-(x)=\{(s,r):s<s_0,\ \ell_x(s)<r<r_x(s)\}.
	\]
	They are open and connected, since each is homeomorphic to
	\[
	(s_0,\infty)\times(0,1)
	\]
	or
	\[
	(-\infty,s_0)\times(0,1),
	\]
	respectively, using the affine parameter in the interval
	$(\ell_x(s),r_x(s))$.
	
	Therefore
	\[
	\mathcal M\setminus\mathcal N(x)
	=
	I^+(x)\sqcup I^-(x)\sqcup S_L(x)\sqcup S_R(x),
	\]
	and these four sets are precisely the four open connected components of the
	complement of the full null cone.
\end{proof}

\begin{lem}[Full null grid]\label{lem:full-null-grid}
	Assume that $\mathcal M$ is directed, in the sense that any two points have a
	common chronological future and a common chronological past. Then every 
	$+$-null leaf meets every $-$-null leaf in exactly one point. Consequently the
	intersection map
	\[
	\mathcal L_+\times\mathcal L_-\longrightarrow\mathcal M,
	\qquad
	(\lambda_+,\lambda_-)\longmapsto \lambda_+ \cap \lambda_-
	\]
	is bijective.
\end{lem}

\begin{proof}
	Let $\lambda_+$ be a $+$-null leaf and $\lambda_-$ a $-$-null leaf. Since
	each inextendible null leaf meets every Cauchy line exactly once, we may write
	them in the Cauchy splitting as continuous graphs
	\[
	r=\alpha(s),\qquad r=\beta(s),
	\]
	where $\alpha$ denotes the $+$-leaf and $\beta$ the $-$-leaf.
	
	First, they meet at most once. If two opposite null leaves met at two distinct
	points, the two null segments between the intersection points would form a null
	bigon. The interior of such a bigon contains timelike related points between
	the two vertices, contradicting the achronality of null boundary generators.
	
	It remains to prove existence of an intersection. Fix $s_0$ and set
	\[
	p=(s_0,\alpha(s_0)),\qquad q=(s_0,\beta(s_0)).
	\]
	
	Suppose first that
	\[
	\alpha(s_0)<\beta(s_0).
	\]
	By future-directedness, choose
	\[
	z\in I^+(p)\cap I^+(q),
	\qquad \tau(z)=s_1>s_0.
	\]
	On the Cauchy line $\Sigma_{s_1}$, the future of $p$ lies on or to the left of
	the $+$-null boundary through $p$, namely $\lambda_+$, while the future of
	$q$ lies on or to the right of the $-$-null boundary through $q$, namely
	$\lambda_-$. Since $z$ is a chronological future of both $p$ and $q$, it lies
	strictly between these two boundary points. Hence
	\[
	\beta(s_1)<\alpha(s_1).
	\]
	Thus the continuous function $\alpha-\beta$ changes sign between $s_0$ and
	$s_1$, and therefore $\lambda_+$ and $\lambda_-$ meet.
	
	Now suppose instead that
	\[
	\beta(s_0)<\alpha(s_0).
	\]
	By past-directedness, choose
	\[
	z\in I^-(p)\cap I^-(q),
	\qquad \tau(z)=s_{-1}<s_0.
	\]
	On $\Sigma_{s_{-1}}$, the past of $p$ lies on or to the right of the past
	segment of the $+$-null leaf $\lambda_+$, while the past of $q$ lies on or to
	the left of the past segment of the $-$-null leaf $\lambda_-$. Since $z$ is a
	chronological past of both $p$ and $q$, it lies strictly between these two
	boundary points. Hence
	\[
	\alpha(s_{-1})<\beta(s_{-1}).
	\]
	Thus $\alpha-\beta$ again changes sign, this time between $s_{-1}$ and $s_0$.
	So $\lambda_+$ and $\lambda_-$ meet.
	
	Therefore every $+$-null leaf meets every $-$-null leaf. Together with uniqueness of
	intersection, this gives the claimed bijection.
\end{proof}
\begin{remark}\label{rem:not-bare-perp}
The distinction between $S_R(x)$ and $S_L(x)$ is essential. The bare spacelike
relation $x\perp y$ remembers only that $y$ is spacelike to $x$; it does not
remember on which spacelike side of $x$ the point $y$ lies. The corrected
back-and-forth construction therefore uses the two oriented spacelike relations
below rather than arbitrary finite partial isomorphisms of $(\ll,\perp)$.
\end{remark}

For distinct non-null-related points $x,y$, define
\[
x\perp_R y \quad\Longleftrightarrow\quad y\in S_R(x),
\]
and
\[
x\perp_L y \quad\Longleftrightarrow\quad y\in S_L(x).
\]
Then
\[
x\perp y \quad\Longleftrightarrow\quad x\perp_R y\ \text{or}\ x\perp_L y .
\]

On any null-free subset $E\subset\mathcal M$, define two strict relations $<_+$
and $<_-$ by
\[
x<_+ y
\quad\Longleftrightarrow\quad
x\ll y\ \text{or}\ x\perp_R y,
\]
and
\[
x<_- y
\quad\Longleftrightarrow\quad
x\ll y\ \text{or}\ x\perp_L y.
\]
Equivalently, $<_+$ and $<_-$ are the two orders induced by the two null leaf
spaces.

\begin{lem}[Null orders]\label{lem:null-orders}
Let $E\subset\mathcal M$ be null-free. Then $<_+$ and $<_-$ are strict linear
orders on $E$. Moreover, for $x,y\in E$,
\[
x\ll y
\quad\Longleftrightarrow\quad
x<_+ y\ \text{and}\ x<_- y,
\]
\[
x\perp_R y
\quad\Longleftrightarrow\quad
x<_+ y\ \text{and}\ y<_- x,
\]
and
\[
x\perp_L y
\quad\Longleftrightarrow\quad
x<_- y\ \text{and}\ y<_+ x.
\]
\end{lem}

\begin{proof}
By Lemma~\ref{lem:four-regions}, for any two distinct points $x,y\in E$ exactly
one of the alternatives
\[
x\ll y,
\qquad
y\ll x,
\qquad
x\perp_R y,
\qquad
x\perp_L y
\]
holds. This gives trichotomy for each of $<_+$ and $<_-$.

Transitivity follows from Lemma~\ref{lem:full-null-grid}: after identifying
$\mathcal M$ with the product of the two ordered null leaf spaces, $<_+$ and
$<_-$ are simply the two coordinate orders. The displayed equivalences are the
four-region decomposition written in these two coordinate orders.
\end{proof}

\begin{lem}[Finite null rectangles]\label{lem:finite-null-rectangles}
Let $A\subset\mathcal M$ be finite and null-free. Suppose that finite lower and
upper sets for the two null orders are given:
\[
A_+^-,\ A_+^+\subset A,
\qquad
A_-^-,\ A_-^+\subset A,
\]
with
\[
A_+^-<_+ A_+^+,
\qquad
A_-^-<_- A_-^+,
\]
where empty lower or upper sets are allowed. Then the cell
\[
\Omega=
\{y\in\mathcal M:
A_+^-<_+ y<_+ A_+^+,
\quad
A_-^-<_- y<_- A_-^+\}
\]
is a nonempty open subset of $\mathcal M$.
\end{lem}
\begin{proof}
	For each inequality appearing in the definition of $\Omega$, the corresponding
	condition is open. Indeed, by Lemma~\ref{lem:four-regions}, conditions of the
	form $a<_+y$, $y<_+a$, $a<_-y$, and $y<_-a$ are unions of the appropriate
	chronological or oriented-spacelike components relative to $a$, hence are open.
	Therefore $\Omega$ is open as a finite intersection of open sets.
	
	It remains to prove nonemptiness. The inequalities in the $<_+$-order determine
	a nonempty interval of $+$-null leaves, and the inequalities in the $<_-$-order
	determine a nonempty interval of $-$-null leaves. Choose one $+$-null leaf and one
	$-$-null leaf in these intervals. By Lemma~\ref{lem:full-null-grid}, these two leaves
	meet in a unique point $y\in\mathcal M$. By construction this point satisfies
	all the defining inequalities, so $y\in\Omega$.
\end{proof}

Let $C\subset\mathcal M$ and $D\subset\mathbb M^2$ be countable dense null-free
sets. The target $\mathbb M^2$ is equipped with its standard two null orders,
also denoted $<_+$ and $<_-$.

Let $\mathscr P$ be the set of all finite partial bijections
\[
p:A\longrightarrow B,
\qquad
A\subset C,
\quad B\subset D,
\]
such that, for all $x,y\in A$,
\[
x<_+y \quad\Longleftrightarrow\quad p(x)<_+p(y),
\]
and
\[
x<_-y \quad\Longleftrightarrow\quad p(x)<_-p(y).
\]
Thus elements of $\mathscr P$ preserve both intrinsic null orders.

\begin{lem}[Admissible maps preserve the causal structure]\label{lem:P-preserves-causal}
If $p\in\mathscr P$, then $p$ preserves and reflects $\ll$, $\perp_R$, $\perp_L$,
and hence also $\perp$.
\end{lem}

\begin{proof}
This follows immediately from Lemma~\ref{lem:null-orders}. For example,
\[
x\ll y
\quad\Longleftrightarrow\quad
x<_+y\ \text{and}\ x<_-y,
\]
and both order relations are preserved and reflected by $p$.
\end{proof}

\begin{lem}[Back and forth]\label{lem:P-back-and-forth}
The family $\mathscr P$ is a back-and-forth system.
\end{lem}

\begin{proof}
We prove Forth; Back is identical with source and target interchanged.
Let $p:A\to B$ belong to $\mathscr P$, and let $c\in C\setminus A$. Define the
finite cuts of $c$ over $A$ by
\[
A_+^-:=\{x\in A:x<_+c\},\qquad
A_+^+:=\{x\in A:c<_+x\},
\]
and
\[
A_-^-:=\{x\in A:x<_-c\},\qquad
A_-^+:=\{x\in A:c<_-x\}.
\]
Transport these cuts to $B$ by $p$. Since $p$ preserves both finite orders, the
transported cuts are consistent. In $\mathbb M^2$ they define an open nonempty
null rectangle
\[
\Omega_D=
\{y\in\mathbb M^2:
 p(A_+^-)<_+y<_+p(A_+^+),
\quad
 p(A_-^-)<_-y<_-p(A_-^+)\}.
\]
Since $D$ is dense and $B$ is finite, choose $d\in D\cap\Omega_D\setminus B$.
Then $d$ has the same two null-order cuts over $B$ as $c$ has over $A$, and
$p\cup\{(c,d)\}\in\mathscr P$.

For Back, start with $d\in D\setminus B$, transport its two finite cuts over $B$
back to $A$ using $p^{-1}$, and apply Lemma~\ref{lem:finite-null-rectangles} in
$\mathcal M$. The resulting source cell is open and nonempty, hence contains a
point $c\in C\setminus A$ by density of $C$ and finiteness of $A$. Then
$p\cup\{(c,d)\}\in\mathscr P$.
\end{proof}

\begin{prop}[Dense back-and-forth]\label{prop:corrected-back-and-forth}
Given $a\in C$ and $b\in D$, there exists a bijection
\[
f:C\longrightarrow D
\]
such that $f(a)=b$ and, for all $x,y\in C$,
\[
x<_+y \quad\Longleftrightarrow\quad f(x)<_+f(y),
\qquad
x<_-y \quad\Longleftrightarrow\quad f(x)<_-f(y).
\]
Consequently,
\[
x\ll y \quad\Longleftrightarrow\quad f(x)\ll f(y),
\qquad
x\perp y \quad\Longleftrightarrow\quad f(x)\perp f(y).
\]
\end{prop}

\begin{proof}
The singleton map $\{a\mapsto b\}$ belongs to $\mathscr P$ vacuously. Enumerate
$C$ and $D$, and alternately apply the Forth and Back clauses of
Lemma~\ref{lem:P-back-and-forth}. The union of the resulting increasing chain of
finite partial maps is a bijection $f:C\to D$. Since each finite stage preserves
both null orders, so does $f$. The final claims follow from
Lemma~\ref{lem:null-orders}.
\end{proof}

\subsection{Proof of Theorem~\ref{main} (directed non-compact case)}
\begin{proof}[Proof of Theorem~\ref{main}]
Fix $a\in\mathcal M$ and $b\in\mathbb M^2$. Choose countable dense null-free sets
$C\subset\mathcal M$ and $D\subset\mathbb M^2$ with $a\in C$ and $b\in D$
(Lemma~\ref{null-free}). By Proposition~\ref{prop:corrected-back-and-forth},
there exists a bijection $f:C\to D$ with $f(a)=b$ preserving and reflecting the
two null orders. By Lemma~\ref{lem:P-preserves-causal}, $f$ preserves and
reflects $\ll$. Lemma~\ref{lem:Phi} therefore extends $f$ uniquely to a causal
isomorphism
\[
\Phi:\mathcal M\longrightarrow\mathbb M^2
\]
with $\Phi(a)=b$.

Conjugation by $\Phi$ gives
\[
\Aut(\mathcal M)\cong \Aut(\mathbb M^2).
\]
This proves the theorem.
\end{proof}

\subsection{Two-dimensional spacetimes with compact Cauchy surfaces}
To handle the case where a spacetime $\mathcal{M}$ has compact Cauchy surfaces, we use the following result:

\begin{theorem}[{\cite[Proposition~3.1 and Theorem~3.5]{K}}]\label{thm:normalizer-universal-cover}
Let $\mathcal{M}$ be a Lorentzian manifold (or a semi-Riemannian manifold) with universal covering $\pi: \overline{\mathcal{M}} \to  \mathcal{M}$. Denote by $\Gamma$ the group $\pi_1(\mathcal{M})$.

Then, $\Gamma \subset \Aut(\overline{\mathcal{M}}) =: G$ (respectively $\Gamma \subset  \Conf(\overline{\mathcal{M}}) =:G_1$). Also,
$\Aut(\mathcal{M})$ (or, $\Conf(\mathcal{M})$) is isomorphic to $\mathcal{N}(\Gamma)/\Gamma$ in which $\mathcal{N}(\Gamma)$ is the normalizer of $\Gamma$ in $\Aut(\overline{\mathcal{M}})$ (or, $\Conf(\overline{\mathcal{M}})$).
\end{theorem}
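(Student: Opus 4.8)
The plan is to realise $M$ as the orbit space $\overline M/\Gamma$, where $\Gamma=\pi_1(M)$ acts on the universal cover $\overline M$ by deck transformations, and then move automorphisms back and forth across the covering $\pi$ using the lifting machinery of covering-space theory. \textbf{Step 1 (deck transformations are automorphisms).} Give $\overline M$ the pulled-back metric $\pi^{*}g$ (respectively the pulled-back conformal class); since $\pi$ is a local diffeomorphism this is a genuine Lorentzian metric (conformal structure), and $\pi$ becomes a local isometry (local conformal map). Every $\gamma\in\Gamma$ satisfies $\pi\circ\gamma=\pi$, so $\gamma^{*}\pi^{*}g=\pi^{*}g$, whence $\gamma\in\mathrm{Aut}(\overline M)$ (resp. $\mathrm{Conf}(\overline M)$); as the $\Gamma$-action is free and properly discontinuous this realises $\Gamma$ as a subgroup of $\mathrm{Aut}(\overline M)$ (resp. $\mathrm{Conf}(\overline M)$), which is the first assertion.

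\textbf{Step 2 (lifting automorphisms of $M$).} Let $f\in\mathrm{Aut}(M)$. Because $\overline M$ is simply connected, the composite $f\circ\pi\colon\overline M\to M$ lifts through $\pi$ to a map $\overline f\colon\overline M\to\overline M$ with $\pi\circ\overline f=f\circ\pi$, and any two lifts differ by an element of $\Gamma$. Lifting $f^{-1}$ likewise and observing that $\overline{f^{-1}}\circ\overline f$ and $\overline f\circ\overline{f^{-1}}$ both cover $\mathrm{id}_M$ — hence are deck transformations — shows $\overline f$ is a diffeomorphism; locally $\overline f=(\pi|_{U})^{-1}\circ f\circ(\pi|_{V})$ is a composite of isometric (conformal) maps, so $\overline f\in\mathrm{Aut}(\overline M)$ (resp. $\mathrm{Conf}(\overline M)$). \textbf{Step 3 ($\overline f$ normalises $\Gamma$).} For $\gamma\in\Gamma$ we have $\pi\circ\overline f\gamma\overline f^{-1}=f\circ\pi\circ\gamma\circ\overline f^{-1}=f\circ\pi\circ\overline f^{-1}=\pi$, so $\overline f\gamma\overline f^{-1}$ covers $\mathrm{id}_M$ and thus lies in $\Gamma$; hence $\overline f\in\cN(\Gamma)$. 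Sending $f$ to the coset $\Gamma\overline f$ is well defined (different lifts differ by $\Gamma$) and is a homomorphism $\mathrm{Aut}(M)\to\cN(\Gamma)/\Gamma$, since $\overline f\,\overline g$ lifts $fg$ and $\overline f$ normalises $\Gamma$.

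\textbf{Step 4 (descent and the isomorphism).} Conversely, each $h\in\cN(\Gamma)$ descends to $\bar h\in\mathrm{Aut}(M)$ via $\bar h(\Gamma x):=\Gamma h(x)$, which is well defined precisely because $h\Gamma h^{-1}=\Gamma$, and is an automorphism (resp. conformal automorphism) of $M=\overline M/\Gamma$ since $\pi$ is a local isometry (local conformal map). This yields a homomorphism $\cN(\Gamma)\to\mathrm{Aut}(M)$ whose kernel is $\{h:\ h(x)\in\Gamma x\ \text{for all }x\}$, which equals $\Gamma$ by freeness of the action together with connectedness of $\overline M$ (the element $\gamma_x$ with $h(x)=\gamma_x x$ is locally constant, hence constant). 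Checking that the constructions of Steps 2–4 are mutually inverse gives $\mathrm{Aut}(M)\cong\cN(\Gamma)/\Gamma$, and the conformal statement follows verbatim with $\mathrm{Conf}$ in place of $\mathrm{Aut}$ throughout.

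\textbf{Main obstacle.} No individual step is deep; the real care is in Step 2, namely in upgrading the purely topological lift $\overline f$ supplied by the lifting criterion to a genuine element of $\mathrm{Aut}(\overline M)$ — this is exactly where local invertibility of $\pi$ and the ``a lift of the identity is a deck transformation'' trick are needed — and in running the isometric and conformal cases in parallel without conflating them. A secondary point worth flagging: if one wants the statement for causal automorphisms in the bare order-theoretic sense rather than for smooth conformal maps, one must first invoke that a causal automorphism is a homeomorphism for the manifold topology before the covering-space lifting criterion applies.
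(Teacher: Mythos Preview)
The paper does not actually prove this theorem: it is quoted verbatim from \cite{K2} and used as a black box to deduce Theorem~\ref{Kim-2}. So there is no ``paper's own proof'' to compare against.

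That said, your argument is the standard covering-space proof and is correct. A couple of remarks. First, the statement as printed contains what is almost certainly a typo: it writes $\Gamma\subset\mathrm{Aut}(\mathcal{M})$, but deck transformations act on the cover, not on the base; you have (rightly) read this as $\Gamma\subset\mathrm{Aut}(\overline M)$. Second, your closing caveat about causal automorphisms is well taken: the paper uses this theorem for $\mathrm{Aut}$ in the causal sense, and the lifting argument does require knowing that causal automorphisms are homeomorphisms (which, for globally hyperbolic spacetimes with the Alexandrov topology agreeing with the manifold topology, is indeed the case). Beyond that there is nothing to add; Steps~1--4 are exactly the classical proof one finds for isometry groups of Riemannian quotients, transported to the Lorentzian/conformal setting.
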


We obtain the following:
\begin{theorem}\label{Kim-2}
Let $\mathcal{M}$ be a two-dimensional globally hyperbolic spacetime having compact Cauchy surfaces. Denote by $\widetilde{\mathcal{M}}$ its universal covering space.

Assuming furthermore that $\widetilde{\mathcal{M}}$ is directed, the group Aut$(\mathcal{M})$ of causal automorphisms of $\mathcal{M}$ is given by
$$ \mathcal{N}(\mathbb{Z}) /\mathbb{Z} $$
where $\mathbb{Z} \cong \pi_1(\mathcal{M})$ is the deck group of the universal cover (acting on $\widetilde{\mathcal{M}}$ by deck transformations), and $\mathcal{N}(\mathbb{Z})$ is the normalizer of $\mathbb{Z}$ in Aut$(\mathbb{M}^2)$.
\end{theorem}
\begin{proof}
	Since $\widetilde{\mathcal{M}}$ is directed with non-compact Cauchy surfaces, Theorem~\ref{main}
	gives $\Aut(\widetilde{\mathcal{M}})\cong \Aut(\mathbb{M}^2)$. Applying
	Theorem~\ref{thm:normalizer-universal-cover} to the deck group
	$\Gamma\cong\pi_1(\mathcal{M})\cong\mathbb{Z}$ yields
	\[
	\Aut(\mathcal{M})\cong \mathcal{N}(\mathbb{Z})/\mathbb{Z},
	\]
	as claimed.
\end{proof}

\subsection{Non-directed two-dimensional globally hyperbolic spacetime}\label{embedding}
Let us define
\[
\mathcal D := \{(x,t)\in \mathbb M^2: |x|+|t|<1\}.
\]
This is the diamond \(I^+((0,-1))\cap I^-((0,+1))\). By
\cite[Theorem~3.5]{K2}, it is causally isomorphic to \(\mathbb M^2\).

\begin{theorem}[{\cite[Theorem~3.6]{K2}}]\label{thm:embed-diamond}
Any two-dimensional spacetime with non-compact Cauchy surfaces can be causally
isomorphically embedded into \(\mathcal D\).
\end{theorem}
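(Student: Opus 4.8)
The plan is to realise $\mathcal{M}$ conformally as an open subset of $\mathcal{D}$ by means of a global double-null coordinate system, and then to check that the resulting conformal embedding not only preserves but also \emph{reflects} the causal order.

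First I would fix a smooth spacelike Cauchy surface $\Sigma\subseteq\mathcal{M}$; since $\mathcal{M}$ is two-dimensional with non-compact Cauchy surfaces, $\Sigma$ is a properly embedded copy of $\mathbb{R}$, which I parametrise by $s\in\mathbb{R}$, and $\mathcal{M}\cong\mathbb{R}\times\Sigma\cong\mathbb{R}^2$. In dimension two the conformal class of $g_{\mathcal{M}}$ is carried by the two null line fields; as $\mathcal{M}$ is simply connected these are orientable, so $\mathcal{M}$ carries two foliations $\mathcal{F}_-,\mathcal{F}_+$ by future-directed null geodesics. Every inextendible null geodesic is an inextendible causal curve, hence meets $\Sigma$ exactly once and transversally, so one obtains well-defined smooth functions $U,V\colon\mathcal{M}\to\mathbb{R}$ by letting $U(p)$ (resp.\ $V(p)$) be the $s$-coordinate of the point where the $\mathcal{F}_-$-leaf (resp.\ $\mathcal{F}_+$-leaf) through $p$ meets $\Sigma$. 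Then $dU$ and $dV$ are everywhere linearly independent (each annihilates exactly one null direction), so $(U,V)$ is an immersion; and $(U,V)$ is injective, because if $(U,V)(p)=(U,V)(q)$ then $p$ and $q$ lie on a common $\mathcal{F}_-$-leaf and on a common $\mathcal{F}_+$-leaf, and since in two dimensions the two future-directed null geodesics issuing from a point of $\mathcal{M}$ meet only at that point and together bound its causal future, having $p,q$ ordered along both common leaves forces $p=q$ (using antisymmetry of $\prec$). By invariance of domain $(U,V)$ is a diffeomorphism onto an open set, and in these coordinates $g_{\mathcal{M}}=\lambda\,dU\,dV$ with $\lambda>0$; hence $(U,V)$ is a conformal diffeomorphism of $\mathcal{M}$ onto an open region equipped with the flat metric $dU\,dV$, i.e.\ an isomorphism of $\mathcal{M}$ onto that region with its intrinsic causal structure.

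Next I would compress the image into $\mathcal{D}$. Replacing $U,V$ by $\varphi\circ U,\psi\circ V$ for increasing diffeomorphisms $\varphi,\psi\colon\mathbb{R}\to(-1,1)$ does not alter the null foliations, hence not the conformal class; after this reparametrisation the image lies in $(-1,1)^2$, and writing $u=t-x$, $v=t+x$ one has $\{\,|u|<1,\ |v|<1\,\}=\{\,|t|+|x|<1\,\}=\mathcal{D}$ since $\max(|u|,|v|)=|t|+|x|$. Let $f\colon\mathcal{M}\to\mathcal{D}$ be the resulting conformal embedding and $\Omega$ its open image. Then $f$ is an isomorphism of $\mathcal{M}$ onto $\Omega$ with the causal order that $\Omega$ carries intrinsically, and since $\Omega\subseteq\mathcal{D}$ and conformal maps preserve $\prec$, one gets $x\prec_{\mathcal{M}}y\Rightarrow f(x)\prec_{\mathcal{D}}f(y)$ immediately.

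The one substantive point — and the step I expect to be the main obstacle — is the converse implication $f(x)\prec_{\mathcal{D}}f(y)\Rightarrow x\prec_{\mathcal{M}}y$; equivalently, that $\Omega$ is causally convex in $\mathcal{D}$, so that the order $\Omega$ inherits from $\mathcal{D}$ coincides with its intrinsic order. Here global hyperbolicity of $\mathcal{M}$ enters essentially. Being conformal to $\mathcal{M}$, the set $\Omega$ is a globally hyperbolic open subset of $\mathbb{M}^2$, and $f(\Sigma)$ is a spacelike acausal curve in $\mathbb{M}^2$ — the graph $t=h(x)$ of a strictly $1$-Lipschitz function on an interval. Since the maps $x\mapsto h(x)+x$ and $x\mapsto h(x)-x$ are strictly monotone, the Minkowski domain of dependence $D_{\mathbb{M}^2}(f(\Sigma))$ is a product of two open intervals in the null coordinates $(u,v)$, hence causally convex; and one checks $\Omega\subseteq D_{\mathbb{M}^2}(f(\Sigma))$, because an inextendible causal curve of $\mathbb{M}^2$ through a point of $\Omega$ restricts to an inextendible causal curve of $\Omega$, which meets $f(\Sigma)$. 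It then remains to promote this to causal convexity of $\Omega$ itself: a causal curve of $\mathbb{M}^2$ joining two points of $\Omega$ but leaving $\Omega$ would cross $\partial\Omega$, and the arc of it lying in $\Omega$ would accumulate against $\partial\Omega$ inside a causal diamond of $\Omega$, contradicting the compactness of that diamond. Granting this, $\prec_{\mathcal{D}}$ restricted to $\Omega$ is precisely the intrinsic order of $\Omega$, so $f$ is the desired causal-isomorphic embedding $\mathcal{M}\to\mathcal{D}$. (Alternatively, the conclusion can also be reached by a one-sided \emph{forth}-only version of the back-and-forth construction of Section~3, together with the fact that $\mathbb{M}^2$ realises every finite strict pattern; but the double-null argument above is the more self-contained route.)
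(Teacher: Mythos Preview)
The paper does not itself prove this statement; it is imported from \cite{K} and used as a black box for the classification in Section~4, so there is no in-paper argument to compare against. Your double-null-coordinate construction is in fact the standard route (and essentially the one in \cite{K,K2}): build $U,V$ by projecting along the two null foliations onto a fixed Cauchy line $\Sigma\cong\mathbb{R}$, obtain a conformal diffeomorphism onto an open $\Omega\subset\mathbb{R}^2$, and compress into $(-1,1)^2=\mathcal{D}$. That portion of the sketch is sound.

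The genuine gap is exactly where you flag it, but the compactness sentence you offer does not close it. With $p,q\in\Omega$, $p\le_{\mathbb{M}^2}q$, and $\gamma$ an $\mathbb{M}^2$-causal curve from $p$ to $q$ first exiting $\Omega$ at $\gamma(t_0)\in\partial\Omega$, the arc $\gamma|_{[0,t_0)}$ is indeed future-inextendible in $\Omega$ and lies in $J^+_\Omega(p)$; but invoking compactness of an $\Omega$-diamond requires also $\gamma|_{[0,t_0)}\subset J^-_\Omega(q')$ for some $q'\in\Omega$, and no such $q'$ is in hand --- taking $q'=q$ presupposes $p\le_\Omega q$, which is the very conclusion you are after, while the inclusion $\Omega\subset D_{\mathbb{M}^2}(f(\Sigma))$ you correctly establish yields causal convexity only of the larger set $D_{\mathbb{M}^2}(f(\Sigma))$. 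What must actually be argued (and this is where global hyperbolicity of $\mathcal{M}$ is used in earnest) is that the endpoint functions of the interval-fibres $\Omega\cap\{u=c\}$ and $\Omega\cap\{v=c\}$ are monotone in $c$; equivalently, that a future-inextendible null ray of $\Omega$ cannot terminate in $\mathbb{M}^2$ at a point lying in the $\mathbb{M}^2$-causal past of another point of $\Omega$. Once that monotonicity is in place, every Minkowski rectangle with opposite corners in $\Omega$ lies in $\Omega$, and order-reflection follows. Your forth-only alternative does not sidestep the issue: it produces an order-embedding of a countable dense set into $\mathbb{M}^2$, but the continuous extension to all of $\mathcal{M}$ requires the image diamonds $[\,p(a_n),p(b_n)\,]$ to shrink to points, which is again a statement about the geometry of $\Omega\subset\mathbb{M}^2$ of the same strength as causal convexity.
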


We fix such an embedding and denote it by
\[
\iota:\mathcal M\hookrightarrow \mathcal D.
\]
In the normalized form used in Kim's classification of non-compact Cauchy-surface
spacetimes, the image \(\iota(\mathcal M)\) is an open connected subset of
\(\mathcal D\), contains the standard Cauchy interval
\[
I_0=\{(x,0):-1<x<1\},
\]
and satisfies Kim's \(\partial\)-condition. These hypotheses are precisely those
needed to apply Kim's extension theorem for causal isomorphisms between such
subsets of \(\mathcal D\) \cite[Theorem~4.3]{K2}. This point is important: the
extension statement used below is not an assertion about arbitrary embedded
subsets of the diamond, but about the normalized domains covered by Kim's
hypotheses.

\medskip\noindent\emph{Embedding boundary.}
For the remainder of this section we fix the associated \emph{embedding boundary}
\[
  \partial_\iota\mathcal M
  := \overline{\iota(\mathcal M)}\setminus \iota(\mathcal M),
\]
where the closure is taken in the closed diamond
\(\overline{\mathcal D}\subset\mathbb M^2\), and
\(\partial_\iota\mathcal M\) is equipped with the induced subspace topology.
Whenever we write causal relations involving points of \(\partial_\iota\mathcal M\),
they are computed using the ambient Minkowski causal order on
\(\overline{\mathcal D}\).

Let us recall the statement for \(\mathcal M=\mathbb M^2\):
\begin{theorem}[{\cite{GM}}]\label{thm:aut-M2}
The group \(\Aut(\mathbb M^2)\) of causal automorphisms of the Minkowski plane is
\[
(\Homeo_\leq(\mathbb R))^2\rtimes S_2.
\]
\end{theorem}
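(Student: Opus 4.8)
The plan is to pass to null coordinates and identify $\textrm{Aut}(\mathbb{M}^2)$ with the automorphism group of a product order. Putting $u=t-x$ and $v=t+x$, the causal relation on $\mathbb{M}^2$ becomes the coordinatewise order on $\mathbb{R}_u\times\mathbb{R}_v$: one has $(u_1,v_1)\prec(u_2,v_2)$ precisely when $u_1\le u_2$ and $v_1\le v_2$. Hence a causal automorphism is exactly an order automorphism of $\mathbb{R}^2$ with the product order, and the whole argument is carried out at the level of the order alone.

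First I would recover the two null foliations purely order-theoretically. For distinct points $p,q$, call them \emph{null-related} if they are comparable and the Alexandrov interval between them is totally ordered; in coordinates this holds exactly when $p$ and $q$ share a $u$-coordinate or a $v$-coordinate, i.e. when they lie on a common null line. The maximal sets on which this relation holds pairwise are then the null lines, and ``disjointness of null lines'' is an equivalence relation with exactly two classes $\mathcal{F}_1=\{v=c\}$ and $\mathcal{F}_2=\{u=c\}$ (two lines from different classes always meet in a single point). Therefore any causal automorphism $\Phi$ either fixes both classes or swaps them; since the reflection $\rho\colon(x,t)\mapsto(-x,t)$ is a causal automorphism exchanging $u$ and $v$, after possibly composing with $\rho$ we may assume $\Phi$ preserves $\mathcal{F}_1$ and $\mathcal{F}_2$ separately.

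Next I would extract the action on each factor. Parametrising the lines of $\mathcal{F}_1$ by their $v$-value identifies $\mathcal{F}_1$ with $\mathbb{R}$, and $\prec$ induces an order on this parameter line (one line is below another iff every point of the first has a point of the second above it) that is isomorphic to $(\mathbb{R},\le)$; thus $\Phi$ induces an order automorphism $g$ of $(\mathbb{R},\le)$, and similarly $\mathcal{F}_2$ yields $f$. Because every point is the unique intersection of one line of $\mathcal{F}_1$ with one line of $\mathcal{F}_2$, this forces $\Phi(u,v)=(f(u),g(v))$ in null coordinates. Conversely any such pair $(f,g)$ visibly defines a causal automorphism, and order automorphisms of $(\mathbb{R},\le)$ are precisely the increasing homeomorphisms, i.e. the elements of $\textrm{Homeo}_{\leq}(\mathbb{R})$. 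This identifies the subgroup of causal automorphisms preserving both families with $\textrm{Homeo}_{\leq}(\mathbb{R})^2$, on which the nontrivial element of $S_2$ (realised by $\rho$) acts by interchanging the two factors; the extension splits, so $\textrm{Aut}(\mathbb{M}^2)\cong\textrm{Homeo}_{\leq}(\mathbb{R})^2\rtimes S_2$.

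The step I expect to be the main obstacle is the second paragraph: showing that an arbitrary causal automorphism must respect the null foliations and act coordinatewise --- concretely, checking that ``null-related'', ``maximal null-clique'', and ``disjointness of null lines'' are genuinely expressible from $\prec$ and hence preserved, and that the induced maps on the spaces of null lines are well-defined order automorphisms and not merely bijections. Once this rigidity is established, the remaining points (that coordinatewise maps are causal automorphisms, that order automorphisms of $\mathbb{R}$ are increasing homeomorphisms, and that $\rho$ splits the extension) are routine.
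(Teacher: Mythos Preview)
The paper does not supply its own proof of this theorem; it is quoted from \cite{GM} and used as a black box. Your approach---passing to null coordinates so that $\prec$ becomes the product order on $\mathbb{R}\times\mathbb{R}$, then recovering the two null foliations order-theoretically---is exactly the method of \cite{GM} (whose very title is ``Product posets and causal automorphisms of the plane''), and your outline is correct. Your characterisation of null-relatedness via total ordering of the Alexandrov interval is the standard one and works as stated; the subsequent steps (maximal null-cliques are the null lines, the two parallelism classes are permuted by any causal automorphism, and order automorphisms of $(\mathbb{R},\le)$ are increasing homeomorphisms) are all routine once this is in hand.

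For contrast, the only place in the present paper where an argument of this flavour is written out is the proof of the companion result for $\mathrm{Conf}(\mathbb{M}^2)$, and there the author takes a \emph{different} route: he works with the metric directly, writes the conformal condition $-dX^+dX^- = -e^{\varphi}\,dx^+dx^-$, and reads off the first-order constraints $X^+_+X^-_+=0$, $X^+_-X^-_-=0$ that force each $X^\pm$ to depend on a single null coordinate. That analytic shortcut is unavailable for mere causal automorphisms, which are not assumed smooth, so your purely order-theoretic derivation is the appropriate one and in fact the one \cite{GM} gives.
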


\medskip\noindent\emph{Convention.}
For \(\Phi\in\Aut(\mathcal M)\), write
\[
\Phi_\iota:=\iota\circ\Phi\circ\iota^{-1}:\iota(\mathcal M)\to\iota(\mathcal M).
\]
Since \(\iota(\mathcal M)\) is one of Kim's normalized domains, Kim's extension
result \cite[Theorem~4.3]{K2} applies to \(\Phi_\iota\) and gives a unique causal
automorphism
\[
\widetilde\Phi_\iota:\mathcal D\to\mathcal D
\]
restricting to \(\Phi_\iota\). By Kim's compactification extension result
\cite[Theorem~5.2]{K2}, this further extends uniquely to a causal automorphism
\[
\widehat\Phi_\iota:\overline{\mathcal D}\to\overline{\mathcal D}.
\]
Whenever we speak of the induced map on \(\partial_\iota\mathcal M\), we mean the
restriction of this map \(\widehat\Phi_\iota\) to the embedding boundary.

\begin{theorem}\label{thm:boundary-induced-homeo}
Let \(\mathcal M\) be a two-dimensional globally hyperbolic spacetime with
non-compact Cauchy surfaces, and fix the normalized Kim embedding
\(\iota:\mathcal M\hookrightarrow\mathcal D\) above. Then every causal
automorphism \(\Phi\in\Aut(\mathcal M)\) induces an order isomorphism
\[
\partial_\iota\mathcal M\to \partial_\iota\mathcal M,
\]
where the order on \(\partial_\iota\mathcal M\) is the ambient Minkowski causal
order on \(\overline{\mathcal D}\). Moreover, the induced map is a homeomorphism
of \(\partial_\iota\mathcal M\) with its subspace topology.
\end{theorem}

\begin{proof}
Let \(\Phi\in\Aut(\mathcal M)\). The conjugate
\(\Phi_\iota=\iota\circ\Phi\circ\iota^{-1}\) is a causal isomorphism from
\(\iota(\mathcal M)\) onto itself. By the preceding convention, Kim's hypotheses
apply: \(\iota(\mathcal M)\) is an open connected subset of \(\mathcal D\), it
contains \(I_0\), and it satisfies the \(\partial\)-condition. Hence
\cite[Theorem~4.3]{K2} gives a unique causal automorphism
\(\widetilde\Phi_\iota:\mathcal D\to\mathcal D\) extending \(\Phi_\iota\). By
\cite[Theorem~5.2]{K2}, this extends uniquely to a causal automorphism
\(\widehat\Phi_\iota:\overline{\mathcal D}\to\overline{\mathcal D}\).

Since \(\widehat\Phi_\iota\) is a homeomorphism and
\(\widehat\Phi_\iota(\iota(\mathcal M))=\iota(\mathcal M)\), it preserves
closures and therefore sends
\[
\partial_\iota\mathcal M
=\overline{\iota(\mathcal M)}\setminus\iota(\mathcal M)
\]
onto itself. Since \(\widehat\Phi_\iota\) is an order isomorphism of the closed
diamond with respect to the ambient causal order, its restriction to
\(\partial_\iota\mathcal M\) is an order isomorphism. The homeomorphism assertion
is the corresponding restriction of the ambient homeomorphism.
\end{proof}

Let \(\Gamma_{\mathcal D}\) denote the ambient causal automorphism group of the
closed diamond, i.e. the restrictions to \(\overline{\mathcal D}\) of the standard
two-null-variable causal automorphisms. Thus, up to the possible exchange of the
two null coordinates,
\[
\Gamma_{\mathcal D}\subseteq
(\Homeo_{\leq}(\mathbb R))^2\rtimes S_2.
\]

\begin{theorem}[Boundary-stabilizer upper bound]\label{thm:nondirected-boundary-upper}
Let \(\mathcal M\) be a two-dimensional globally hyperbolic spacetime with
non-compact Cauchy surfaces, and fix a normalized Kim embedding
\(\iota:\mathcal M\hookrightarrow\mathcal D\). Then there is an injective
homomorphism
\[
\Aut(\mathcal M)
\hookrightarrow
\Stab_{\Gamma_{\mathcal D}}(\partial_\iota\mathcal M).
\]
Thus the non-directed boundary statement is a necessary ambient stabilizer
condition, not a realization theorem for the full stabilizer.
\end{theorem}

\begin{proof}
The map sends \(\Phi\in\Aut(\mathcal M)\) to the ambient extension
\(\widehat\Phi_\iota\). By Theorem~\ref{thm:boundary-induced-homeo}, this ambient
extension preserves \(\partial_\iota\mathcal M\), so its image lies in the stated
stabilizer. Uniqueness of the Kim extension gives injectivity.
\end{proof}

The preceding theorem is the general non-directed statement used below. Further
reductions occur only after specifying additional structure of the embedding
boundary.

\begin{definition}[Graph component]\label{def:graph-component}
A subset \(C^\circ\subset\partial_\iota\mathcal M\) will be called a
nondegenerate graph component if, after possibly exchanging the two null
coordinates \((u,v)\) on \(\mathcal D\), there is a nonempty open interval
\(I\subset\mathbb R\) and a continuous monotone homeomorphism
\[
h:I\to h(I)
\]
such that
\[
C^\circ=\{(u,h(u)):u\in I\},
\]
and \(C^\circ\) is relatively open in its connected boundary component.
\end{definition}

\begin{prop}[Matched boundary action]\label{prop:matched-boundary-action}
Let \(C^\circ\subset\partial_\iota\mathcal M\) be a nondegenerate graph component
as in Definition~\ref{def:graph-component}. Let \(G_{C^\circ}\) be the subgroup
of \(\Aut(\mathcal M)\) whose ambient extensions preserve \(C^\circ\) setwise.
Then the ambient image of \(G_{C^\circ}\) is contained in the group of ambient
maps of the form
\[
(u,v)\longmapsto (f(u),g(v)),
\]
or, after the null-coordinate exchange, the corresponding swapped form, such that
\[
f(I)=I,\qquad g(h(I))=h(I),
\]
and
\[
g\circ h=h\circ f
\qquad\text{on }I.
\]
Thus the induced action on the graph component is of one-reparametrization type:
once \(f\) is known on \(I\), the action of \(g\) on \(h(I)\) is forced by
\[
g=h\circ f\circ h^{-1}.
\]
The actual automorphism group may be a proper subgroup, because the ambient map
must also preserve the remaining boundary data
\(\partial_\iota\mathcal M\setminus C^\circ\).
\end{prop}

\begin{proof}
Let \(\Phi\in G_{C^\circ}\), and let \(\widehat\Phi_\iota\) be its ambient
extension. Up to the possible exchange of the two null coordinates, write
\[
\widehat\Phi_\iota(u,v)=(f(u),g(v)).
\]
Since \(\widehat\Phi_\iota\) preserves \(C^\circ\) setwise, for every \(u\in I\)
we have
\[
(f(u),g(h(u)))\in C^\circ.
\]
But points of \(C^\circ\) are exactly the points of the form \((s,h(s))\) with
\(s\in I\). Hence \(f(I)=I\), \(g(h(I))=h(I)\), and
\[
g(h(u))=h(f(u))
\]
for all \(u\in I\). This is precisely \(g\circ h=h\circ f\). The final assertion
follows because preservation of \(C^\circ\) is only one boundary constraint; the
full embedding boundary must still be preserved.
\end{proof}

\begin{remark}[Boundary stabilizers and graph components]\label{rem:boundary-stabilizers-graphs}
The non-directed case should be understood in two layers. In full generality one
obtains the boundary-stabilizer upper bound
\[
\Aut(\mathcal M)\hookrightarrow
\Stab_{\Gamma_{\mathcal D}}(\partial_\iota\mathcal M).
\]
When a nondegenerate graph component of \(\partial_\iota\mathcal M\) is preserved,
the induced action on that component is matched rather than factorized: the two
null reparametrizations satisfy \(g\circ h=h\circ f\). If no such component is
singled out, or if the residual boundary is singular, no reduction to a one-sector
group is asserted beyond the stabilizer bound. In particular, a reduction to the
stabilizer of a countable set is valid only when the relevant residual boundary
data have actually been proved countable.
\end{remark}

\begin{theorem}\label{main-1}
Let \(\mathcal M\) be a two-dimensional globally hyperbolic spacetime.
\begin{enumerate}[(i)]
\item If \(\mathcal M\) has non-compact Cauchy surfaces and is directed, then
\[
\Aut(\mathcal M)\cong (\Homeo_{\leq}(\mathbb R))^2\rtimes S_2.
\]
\item If \(\mathcal M\) has compact Cauchy surfaces and its universal cover is
directed, then
\[
\Aut(\mathcal M)\cong \mathcal N_{\Gamma}(\mathbb Z)/\mathbb Z,
\qquad
\Gamma=(\Homeo_{\leq}(\mathbb R))^2\rtimes S_2.
\]
\item If \(\mathcal M\) has non-compact Cauchy surfaces and is non-directed, then,
for every fixed normalized Kim embedding \(\iota:\mathcal M\hookrightarrow\mathcal D\),
there is an injective homomorphism
\[
\Aut(\mathcal M)\hookrightarrow
\Stab_{\Gamma_{\mathcal D}}(\partial_\iota\mathcal M).
\]
If a preserved nondegenerate graph component
\(C^\circ=\{(u,h(u)):u\in I\}\subset\partial_\iota\mathcal M\) is present, then on
that component the two null reparametrizations satisfy
\[
g\circ h=h\circ f.
\]
\item If \(\mathcal M\) has compact Cauchy surfaces and universal cover
\(\widetilde{\mathcal M}\) with deck group \(\Delta\), apply the relevant preceding
alternative to \(\widetilde{\mathcal M}\). More precisely, if \(H\) denotes either
the directed model group \(\Gamma\) or the appropriate boundary-stabilizer
upper-bound group for \(\widetilde{\mathcal M}\), then
\[
\Aut(\mathcal M)\hookrightarrow \mathcal N_H(\Delta)/\Delta,
\]
with equality in the directed case described in item~(ii).
\end{enumerate}
All non-directed assertions are upper-bound statements: the ambient stabilizer or
matching condition is necessary for a spacetime automorphism, but no surjectivity
onto the displayed ambient group is asserted.
\end{theorem}

\begin{proof}
Item~(i) is Theorem~\ref{main}. Item~(ii) is Theorem~\ref{Kim-2}. Item~(iii) is
Theorem~\ref{thm:nondirected-boundary-upper}, together with
Proposition~\ref{prop:matched-boundary-action} when the boundary contains a
preserved graph component. Item~(iv) is the same normalizer mechanism applied to
the universal cover: a spacetime automorphism of \(\mathcal M\) lifts to an
automorphism of \(\widetilde{\mathcal M}\) normalizing the deck group, and hence
lands in \(\mathcal N_H(\Delta)/\Delta\), where \(H\) is the relevant directed
model group or non-directed boundary-stabilizer upper-bound group. The final
sentence records the fact that all non-directed arguments above construct
injective homomorphisms into ambient boundary-stabilizer groups; they do not prove
that every ambient stabilizer is realized by an automorphism of the spacetime.
\end{proof}

\subsection{Group of conformal automorphisms of \texorpdfstring{$\mathcal{M}$}{M}}
Let us denote by $\Diff_{\le}(\mathbb{R})$ the group of increasing $C^\infty$-diffeomorphisms of the real line.

In \cite{K}, a characterization of the conformal automorphism group of a two-dimensional globally hyperbolic spacetime with a non-compact Cauchy surface is given.
We obtain a simpler characterization of $C^\infty$-conformal diffeomorphisms of the Minkowski plane using null coordinates $x^+ = t+x$, $x^-= t-x$ (the standard metric on $\mathbb{M}^2$ written in terms of null coordinates is given by $-dx^+dx^-$):

\begin{theorem}\label{main-2}
The group $\Conf(\mathbb{M}^2)$ of conformal diffeomorphisms of the plane is given by
$$      (\Diff_{\le}(\mathbb{R}))^{2} \rtimes D_2, $$
where $D_2$ is the dihedral group, $D_2 \simeq C_2 \times C_2$.
\end{theorem}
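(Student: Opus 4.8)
The plan is to work entirely in null coordinates $x^+ = t+x$, $x^- = t-x$, in which the standard metric on $\mathbb{M}^2$ is $g = -dx^+dx^-$. First I would record the elementary fact that the coordinate vector fields $\partial_{x^+}, \partial_{x^-}$ span the two null line-fields of $\mathbb{M}^2$, and that these null line-fields are an invariant of the conformal class: any conformal diffeomorphism $\Phi$ sends null geodesics to null geodesics (up to reparametrization), hence permutes the two foliations of $\mathbb{M}^2$ by null lines $\{x^+ = \mathrm{const}\}$ and $\{x^- = \mathrm{const}\}$. This permutation is a group homomorphism $\mathrm{Conf}(\mathbb{M}^2) \to S_2$; composing with the orientation-type homomorphism (time-orientation preserving vs reversing) will produce the claimed $D_2 \simeq C_2\times C_2$ once one checks the two $C_2$'s are independent, realized respectively by $(x^+,x^-)\mapsto(x^-,x^+)$ (the spatial reflection, swapping the foliations and preserving time-orientation) and by $(x^+,x^-)\mapsto(-x^+,-x^-)$ (time reversal, fixing each foliation).

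Next I would identify the kernel: a conformal diffeomorphism $\Phi$ preserving each null foliation \emph{and} the time orientation. Preserving the foliation $\{x^+=\mathrm{const}\}$ means the value of $x^+\circ\Phi$ depends only on $x^+$, i.e. $x^+\circ\Phi = \phi(x^+)$ for some smooth function $\phi$; likewise $x^-\circ\Phi = \psi(x^-)$. So $\Phi(x^+,x^-) = (\phi(x^+),\psi(x^-))$. Then one computes $\Phi^*g = -\phi'(x^+)\psi'(x^-)\,dx^+dx^-$, which is automatically a (smooth, nowhere-vanishing) conformal multiple of $g$, with conformal factor $\phi'(x^+)\psi'(x^-)$; since $\Phi$ is a diffeomorphism, $\phi'$ and $\psi'$ are nowhere zero, and preservation of time orientation forces $\phi' > 0$, $\psi' > 0$, i.e. $\phi,\psi \in \mathrm{Diffeo}_\leq(\mathbb{R})$. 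Conversely every such pair gives a conformal diffeomorphism. This shows the kernel is exactly $\mathrm{Diffeo}_\leq(\mathbb{R})^2$.

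Finally I would assemble the extension: the $D_2$ action on $\mathbb{M}^2$ by the two involutions above normalizes $\mathrm{Diffeo}_\leq(\mathbb{R})^2$ — swapping coordinates conjugates $(\phi,\psi)$ to $(\psi,\phi)$, and $x^\pm\mapsto -x^\pm$ conjugates $\phi$ to $x\mapsto -\phi(-x)$, still an element of $\mathrm{Diffeo}_\leq(\mathbb{R})$ — and the homomorphism $\mathrm{Conf}(\mathbb{M}^2)\to D_2$ admits this as a section, so the sequence $1\to\mathrm{Diffeo}_\leq(\mathbb{R})^2\to\mathrm{Conf}(\mathbb{M}^2)\to D_2\to 1$ splits, giving $\mathrm{Conf}(\mathbb{M}^2)\cong \mathrm{Diffeo}_\leq(\mathbb{R})^2\rtimes D_2$. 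The one step deserving genuine care — and the main obstacle — is the claim that a conformal diffeomorphism must carry each of the two null foliations to one of the two null foliations (as opposed to, say, to some other family of null curves): this uses that in dimension $2$ the null \emph{directions} at each point are exactly two, that $\Phi$ being conformal (hence causal up to time-orientation) maps null vectors to null vectors pointwise, and that the integral curves of a null line-field in $\mathbb{M}^2$ are precisely the affine null lines; one must also verify the resulting $\phi$, determined a priori only as a function on the quotient of $\mathbb{M}^2$ by the $x^+$-foliation, is smooth, which follows by reading it off on a transversal such as $\{x^- = 0\}$.
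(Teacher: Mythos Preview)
Your proof is correct and reaches the same conclusion as the paper, but the route to the key structural step differs. The paper proceeds by brute-force expansion of the conformal condition $\Phi^*(-dx^+dx^-) = -e^{\varphi}dx^+dx^-$: writing $X^\pm = X^\pm(x^+,x^-)$ and equating coefficients yields the system $X^+_+X^-_+ = 0$, $X^+_-X^-_- = 0$, $X^+_+X^-_- + X^+_-X^-_+ > 0$, from which one reads off directly that $(X^+,X^-)$ is either $(f(x^+),g(x^-))$ or $(f(x^-),g(x^+))$ with $f',g'$ of the same sign. You instead argue geometrically that a conformal map permutes the two null line-fields (using that there are exactly two null directions at each point in dimension~$2$, and continuity), which immediately gives the same dichotomy without the PDE manipulation. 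Your approach has the advantage of making the $D_2$ transparent from the outset as (permutation of foliations)$\times$(time-orientation), and packages the semidirect product cleanly via a split short exact sequence; the paper instead verifies normality of $\mathrm{Diffeo}_\leq(\mathbb{R})^2$ by explicit conjugation with the generators $p_1:(x^+,x^-)\mapsto(x^-,x^+)$ and $p_2:(x^+,x^-)\mapsto(-x^-,-x^+)$. Both arguments are complete; yours is more conceptual, the paper's more computational and self-contained (it does not need the continuity argument to globalize the pointwise permutation of null directions).
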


\begin{proof}
Let $\Phi : \mathbb{M}^2 \to \mathbb{M}^2$, $(x^+, x^-) \mapsto (X^+,X^-)$ be a conformal diffeomorphism of the plane.

We have:
$$ -dX^+dX^- = -e^{\varphi(x^+,x^-)} dx^+dx^-, $$
as $\Phi$ is a conformal map, with $\varphi(x^+,x^-)$ being an arbitrary function.

Writing $X^+ = X^+(x^+,x^-)$ and $X^- = X^-(x^+,x^-)$ we obtain
$$ -\Big(\frac{\partial X^+}{\partial x^+} dx^+ + \frac{\partial X^+}{\partial x^-} dx^-\Big)\Big(\frac{\partial X^-}{\partial x^+} dx^+ + \frac{\partial X^-}{\partial x^-} dx^-\Big) = -e^{\varphi(x^+,x^-)} dx^+dx^-, $$
hence
\begin{eqnarray*}
 \partial_+ X^+ \partial_+ X^- & = & 0   \\
 \partial_- X^+ \partial_- X^- & =  & 0    \\
 \partial_+ X^+ \partial_- X^- + \partial_- X^+ \partial_+ X^- & > &  0
\end{eqnarray*}
Finally,
\begin{eqnarray*}
  X^+ = X^+(x^+),  X^- = X^-(x^-), & \textrm{and} & X^+_+X^-_- >0  \\
  &  \textrm{Or}   &  \\
  X^+ = X^+(x^-),  X^- = X^-(x^+), & \textrm{and} & X^+_-X^-_+ >0.
\end{eqnarray*}
The rest of the proof now follows \cite{GM}: any conformal diffeomorphism of $\mathbb{M}^2$ is of one of the forms $(x^+,x^-) \mapsto (f(x^+), g(x^-))$ or $(x^+,x^-) \mapsto (f(x^-), g(x^+))$, where $f,g$ are diffeomorphisms of the real line, which are both increasing or decreasing.

It follows that the group $\Conf(\mathbb{M}^2)$ is generated by elements in $(\Diff_{\le}(\mathbb{R}))^{2}$, space reflections $(x^+, x^-) \mapsto (x^-, x^+)$ and time reflections $(x^+,x^-) \mapsto (-x^-, -x^+)$.

We still need to see that the subgroup $(\Diff_{\le}(\mathbb{R}))^{2}$ is normal in $\Conf(\mathbb{M}^2)$. Let $p_1$ denote the transposition $(x^+, x^-) \mapsto (x^-, x^+)$; for any $n \in  (\Diff_{\le}(\mathbb{R}))^{2}$, $n: (x^+,x^-) \mapsto (f(x^+), g(x^-))$ we have $p_1np_1: (x^+,x^-) \mapsto (g(x^+), f(x^-))$ where $g, f$ are increasing diffeomorphisms of $\mathbb{R}$.
Hence $p_1np_1 \in  (\Diff_{\le}(\mathbb{R}))^{2}$ as required.

Let $p_2$ denote the map $(x^+, x^-) \mapsto (-x^-, -x^+)$; then we have $$p_2np_2: (x^+,x^-) \mapsto (-x^-, -x^+) \mapsto (f(-x^-), g(-x^+)) \mapsto (-g(-x^+), -f(-x^-)).  $$
Defining $f_1, g_1$ by $f_1(x^+) = -g(-x^+)$, $g_1(x^-) = -f(-x^-)$ it can be seen that both $f_1$ and $g_1$ are increasing diffeomorphisms  of $\mathbb{R}$. Let now $p := p_1p_2: (x^+,x^-) \mapsto (-x^+,-x^-)$ (note that $p^2=\mathrm{id}$, so the discrete group generated by $p_1,p_2$ is $\{1,p_1,p_2,p_1p_2\}\simeq C_2\times C_2 = D_2$); a similar reasoning shows that $pnp$ is in $(\Diff_{\le}(\mathbb{R}))^{2}$, whence the result follows.

This proves the theorem.
\end{proof}

Applying the above considerations allows us to obtain the following:

\begin{theorem}[Conformal symmetry bounds]\label{thm:2d-aut-class}
Let \(\mathcal M\) be a two-dimensional globally hyperbolic spacetime, and put
\[
\Gamma_1=(\Diff_{\leq}(\mathbb R))^2\rtimes D_2.
\]
\begin{enumerate}[(i)]
\item If \(\mathcal M\) has non-compact Cauchy surfaces and is smoothly conformally
identified with the directed model, then
\[
\Conf(\mathcal M)\cong \Gamma_1.
\]

\item If \(\mathcal M\) has compact Cauchy surfaces and universal cover
\(\widetilde{\mathcal M}\), with deck group \(\Delta\cong\mathbb Z\), then
\[
\Conf(\mathcal M)
\cong
\mathcal N_{\Conf(\widetilde{\mathcal M})}(\Delta)/\Delta.
\]
If \(\Conf(\widetilde{\mathcal M})\) is replaced by an ambient upper-bound subgroup
\(H\), this gives the corresponding upper bound
\[
\Conf(\mathcal M)
\hookrightarrow
\mathcal N_H(\Delta)/\Delta.
\]

\item If \(\mathcal M\) has non-compact Cauchy surfaces and is non-directed, then,
for every fixed normalized Kim embedding \(\iota:\mathcal M\hookrightarrow\mathcal D\),
every conformal automorphism is in particular a causal or anti-causal automorphism
and induces an ambient null-coordinate reparametrization, possibly composed with one
of the discrete null-coordinate symmetries in \(D_2\), preserving the embedding boundary
\(\partial_\iota\mathcal M\). Since \(\iota\) is only a causal isomorphism, this ambient
representative is a priori a homeomorphism, so one obtains the continuous
boundary-stabilizer bound
\[
\Conf(\mathcal M)
\hookrightarrow
\Stab_{(\Homeo_{\le}(\mathbb R))^2\rtimes D_2}(\partial_\iota\mathcal M).
\]
The smooth refinement, with respect to a conformal embedding, is recorded in
Remark~\ref{rem:conformal-smoothness}.

\item If a preserved nondegenerate graph component
\[
C^\circ=\{(u,h(u)):u\in I\}\subset\partial_\iota\mathcal M
\]
is present, then any separated ambient representative
\((u,v)\mapsto(f(u),g(v))\) satisfies the matched boundary condition
\[
g\circ h=h\circ f
\qquad\text{on }I,
\]
with the analogous relation after applying the discrete null-coordinate symmetries.
Since the boundary function \(h\) is only known a priori to be continuous and
monotone, this graph-component statement is an upper-bound constraint unless
additional smoothness of \(h\) is assumed.
\end{enumerate}
Thus, exactly as in the causal case, all non-directed conformal statements are
boundary-stabilizer or graph-component upper bounds; no surjectivity onto the
displayed ambient stabilizer is asserted.
\end{theorem}

\begin{proof}
The first statement is conditional on the stated smooth conformal identification
with the directed model. Under such an identification, conjugation identifies
\(\Conf(\mathcal M)\) with \(\Conf(\mathbb M^2)\), and Theorem~\ref{main-2} gives
\(\Gamma_1\). The compact statement is the standard normalizer description for
quotients by the deck group: a conformal
automorphism of \(\mathcal M\) lifts to a conformal automorphism of
\(\widetilde{\mathcal M}\) normalizing \(\Delta\), and conversely any element of
\(\mathcal N_{\Conf(\widetilde{\mathcal M})}(\Delta)\) descends to the quotient.
If one has only an ambient upper bound \(H\) for \(\Conf(\widetilde{\mathcal M})\),
the same argument gives an embedding into \(\mathcal N_H(\Delta)/\Delta\).

Assume now that \(\mathcal M\) has non-compact Cauchy surfaces and is
non-directed. A conformal diffeomorphism is, up to the discrete exchanges and
reversals described in Theorem~\ref{main-2}, a causal or anti-causal automorphism
preserving the two null directions. Its ambient extension is obtained by the same
Kim extension mechanism used in Theorem~\ref{thm:boundary-induced-homeo}, applied to
the underlying causal or anti-causal map after composing, if necessary, with the
appropriate discrete reflection. This ambient representative is a causal automorphism
of the closed diamond, hence an element of \((\Homeo_{\le}(\mathbb R))^2\rtimes D_2\)
preserving \(\partial_\iota\mathcal M\); this gives the continuous boundary-stabilizer
bound. Smoothness of the ambient representative is not automatic, since \(\iota\) is
only a causal isomorphism; see Remark~\ref{rem:conformal-smoothness}.

If a graph component \(v=h(u)\) is preserved and an ambient map is written in
separated form \((u,v)\mapsto(f(u),g(v))\), preservation of the graph is exactly the
condition \(g\circ h=h\circ f\). Conjugating by the discrete generators in \(D_2\)
gives the corresponding anti-causal or coordinate-swapped versions. Since no
smoothness of \(h\) is asserted in the causal embedding construction, this is a
necessary ambient constraint rather than a realization claim for the full displayed
group.
\end{proof}

\begin{remark}[Smooth refinement of the non-directed conformal bound]\label{rem:conformal-smoothness}
The continuous bound in Theorem~\ref{thm:2d-aut-class}(iii) cannot be upgraded to the
smooth stabilizer $\Stab^{\infty}_{\Gamma_1}(\partial_\iota\mathcal M)$ by the Kim
mechanism alone: in $1+1$ dimensions a causal isomorphism need not be smooth, so
conjugating a (smooth) conformal automorphism by the merely topological embedding
$\iota$ need not yield a smooth ambient map. A smooth description is nonetheless
available through a conformal embedding. Since $\mathcal M\cong\mathbb R^2$ is simply
connected, it is globally conformally flat, hence admits a smooth conformal
diffeomorphism $j:\mathcal M\to\Omega$ onto an open subset $\Omega\subset\mathbb M^2$.
With respect to the associated boundary $\partial_j\mathcal M$, every conformal
automorphism is carried to a smooth separated null reparametrization by
Theorem~\ref{main-2}, so
\[
\Conf(\mathcal M)\hookrightarrow \Stab^{\infty}_{\Gamma_1}(\partial_j\mathcal M).
\]
The price is that this smooth stabilizer is attached to $\partial_j\mathcal M$ rather
than to the causal-embedding boundary $\partial_\iota\mathcal M$ used elsewhere in this
section.
\end{remark}

\section{Applications: boundary matching mechanisms in \texorpdfstring{$1+1$}{1+1} dimensions}
\label{sec:applications}
This section is heuristic and is meant to relate the classification results of Section~4 to symmetry
patterns familiar from asymptotic and boundary analyses in $1+1$ dimensions. We do not attempt to set up a
full scattering framework here. The key observation is that the directed/non-directed dichotomy of the
symmetry description leads to a \emph{factorized} versus \emph{matched} action of large reparametrization groups
on null-type boundary components. We first state this dichotomy abstractly (\S5.1), then develop the
moving-mirror model as a detailed worked example (\S5.2), and finally discuss connections to FLRW cosmology (\S5.3).

\smallskip
\noindent
Throughout this discussion, ``null-type boundaries'' refer to the relevant null components of the
embedding boundary $\partial_\iota \mathcal M=\overline{\iota(\mathcal M)}\setminus\iota(\mathcal M)$ in the closed diamond
$\overline{\mathcal D}$, arising from the causal embedding
$\iota:\mathcal M\hookrightarrow \mathcal D$ fixed in Section~\ref{embedding}. We denote these components
(when they exist) by $\mathscr I^\pm_\iota\subset \partial_\iota\mathcal M$. We use $\partial_\iota\mathcal M$
only as this controlled completion boundary (not as a full Penrose conformal boundary).

\subsection{Directed vs.\ non-directed: factorized vs.\ matched boundary actions}
The classification of Section~4 has a direct physical reading: it controls how symmetries act on the
null-type boundary components of $\mathcal M$, and the dividing line is directedness.

When $\mathcal M$ is non-compact and directed, Theorem~\ref{main} identifies it, up to causal isomorphism,
with $\mathbb M^2$. In null coordinates $(x^+,x^-)$, causal and conformal automorphisms act on the two null
variables separately, up to discrete exchange and reflection. The two null-type boundary components
therefore carry \emph{independent} reparametrizations---one sector acting on $x^+$, the other on
$x^-$---a \emph{factorized} boundary action.

When $\mathcal M$ is non-directed, the embedding of Section~\ref{embedding} may present a null-type boundary
component as a monotone relation between the two null coordinates,
\[
x^- = h(x^+).
\]
A causal automorphism $(x^+,x^-)\mapsto (f(x^+),g(x^-))$ preserves this component only if
\begin{equation}\label{eq:matching}
	g\circ h = h\circ f,
\end{equation}
so the two reparametrizations can no longer be chosen independently: wherever $h$ is invertible,
\eqref{eq:matching} forces $g = h\circ f\circ h^{-1}$. The boundary relation therefore couples the two null
sectors into a single \emph{matched} action, in contrast with the factorized case above
(Figure~\ref{fig:matched-unmatched}). In the smooth category the same mechanism applies with
$\Homeo_{\le}(\mathbb R)$ replaced by $\Diff_{\le}(\mathbb R)$.

In the language of Section~4, this is the passage from the two-sector upper bound
$\Gamma_1=(\Diff_{\le}(\mathbb R))^2\rtimes D_2$ to the one-sector behaviour
$\Gamma_2=\Diff_{\le}(\mathbb R)\rtimes D_2$ of Theorem~\ref{thm:2d-aut-class}, together with the
stabilizer reductions and normalizer-quotient descriptions recorded there.

The simplest instance of \eqref{eq:matching} is an affine matching map, illustrated by AdS$_2$.

\begin{example}[AdS$_2$ as a reflecting cavity]
Anti--de Sitter space AdS$_2$ has a timelike conformal boundary, so it is not globally hyperbolic in the
strict sense of the main theorems---classical and quantum evolution require boundary conditions at infinity
\cite{AvisIshamStorey,IshibashiWaldAdS}---but it exhibits the matched mechanism in a transparent form. On the
universal cover, AdS$_2$ is conformal to the strip
$ds^2 \propto -d\tau^2 + d\rho^2$, $\rho\in(-\pi/2,\pi/2)$.
In null variables $x^\pm:=\tau\pm\rho$, the right boundary $\rho=\pi/2$ is the affine relation
\[
x^- = h(x^+) := x^+ - \pi,
\]
and a reparametrization pair $(f,g)$ preserves it exactly when $g\circ h = h\circ f$, i.e.\ \eqref{eq:matching}.
Because $h$ is affine, the matched sector is a rigid conjugate of the $x^+$ reparametrizations; the moving
mirrors of the next subsection realize the same mechanism with a genuinely non-affine matching map.
\end{example}

\begin{figure}[!h]
\begin{center}
\IfFileExists{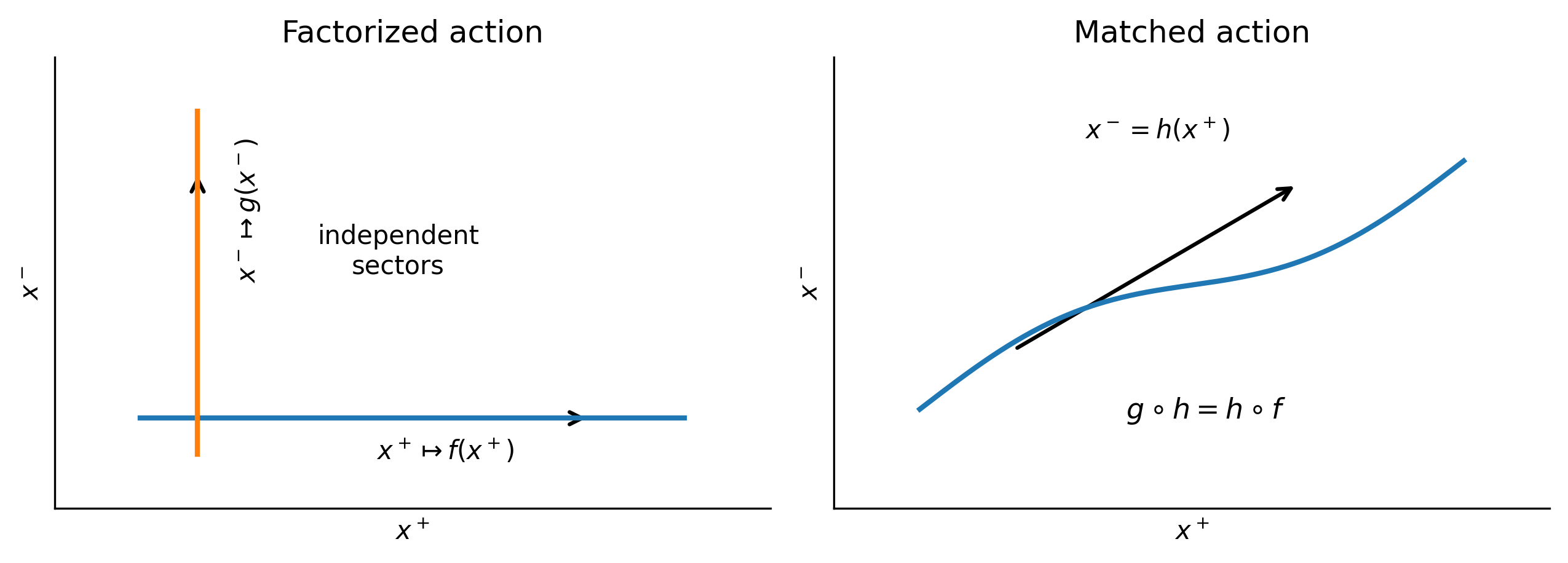}{%
	\includegraphics[width=0.9\linewidth]{matched-unmatched.png}}%
{\fbox{matched-unmatched.png}}
\caption{Schematic of the factorized versus matched boundary action in null coordinates. In the
	directed case, the two null-type boundary components carry independent reparametrizations
	$x^+\mapsto f(x^+)$ and $x^-\mapsto g(x^-)$. In the presence of a boundary component given by a
	monotone graph $x^- = h(x^+)$, boundary preservation forces the matching condition $g\circ h = h\circ f$
	(cf.\ \eqref{eq:matching}).}
\label{fig:matched-unmatched}
\end{center}
\end{figure}

\subsection{Moving mirrors: boundary matching, horizons, and particle creation}
\label{subsec:moving-mirror}
A particularly transparent $1+1$-dimensional application of the ``matched'' symmetry mechanism is provided
by the standard moving-mirror model (see e.g.\ \cite{FullingDavies,CarlitzWilley}). Work in null
coordinates on $\mathbb{M}^2$,
\[
u:=t-x,\qquad v:=t+x,
\]
so that future-directed causality corresponds to the product order $(u,v) \prec (u',v')$ iff
$u\le u'$ and $v\le v'$. A (perfectly reflecting) mirror is specified by a timelike trajectory $x=z(t)$,
which in null coordinates can be written as a graph relation
\begin{equation}\label{eq:mirror-raytracing}
	v=p(u),
\end{equation}
where $p$ is strictly increasing (timelikeness forces monotonicity). Equivalently, one may write $u=h(v)$
with $h:=p^{-1}$ on the relevant range; see Figure~\ref{fig:moving-mirror}.

\begin{figure}[!h]
\begin{center}
\IfFileExists{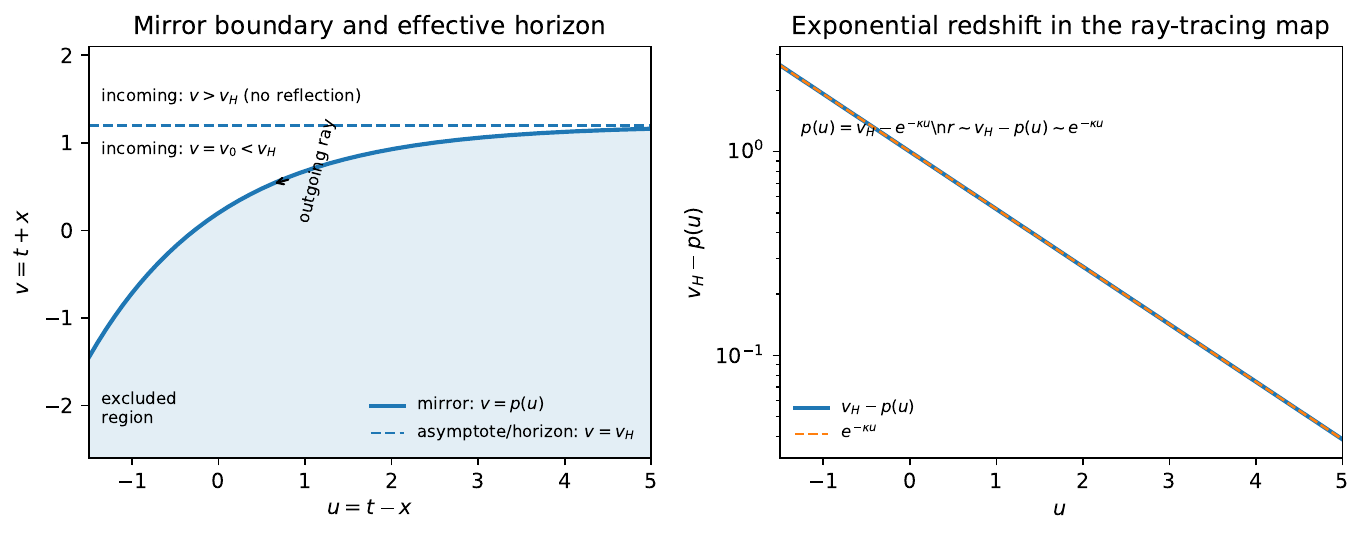}{%
	\includegraphics[width=0.92\linewidth]{fig_moving_mirror_horizon_redshift.pdf}}%
{\fbox{fig\_moving\_mirror\_horizon\_redshift.pdf}}
\caption{A moving mirror induces a ray-tracing map $v=p(u)$ that glues incoming and outgoing null data.
	The boundary relation enforces a matching (conjugacy) constraint on admissible reparametrizations.}
\label{fig:moving-mirror}
\end{center}
\end{figure}

\medskip
\noindent\emph{Mirror as a boundary component in the sense of $\partial_\iota\mathcal{M}$.}
Define the physical region $\mathcal{M}$ to be the open subspacetime on (say) one side of the mirror, so
that the mirror worldline itself is \emph{not} included in $\mathcal{M}$, but appears as a boundary
component of its completion. After choosing a causal embedding $\iota:\mathcal{M}\hookrightarrow\mathcal{D}$
as in Section~\ref{embedding}, the image of the mirror worldline becomes a connected component of the
embedding boundary $\partial_\iota\mathcal{M}$. Thus, in this model the ``mirror'' is literally a component
of $\partial_\iota\mathcal{M}$ (a boundary in the completion sense used throughout the paper), rather than
an asymptotic component at null infinity.

\medskip
\noindent\emph{How the boundary glues the two null sectors.}
For a massless scalar, solutions of $\Box\phi=0$ on $\mathbb{M}^2$ decompose into independent chiral pieces,
\[
\phi(u,v)=F(u)+G(v).
\]
Imposing a reflecting boundary condition on the mirror (e.g.\ Dirichlet) means
\[
0=\phi\bigl(u,p(u)\bigr)=F(u)+G\bigl(p(u)\bigr),
\]
hence
\begin{equation}\label{eq:mirror-gluing}
	G(v)=-F\bigl(p^{-1}(v)\bigr)=-F\bigl(h(v)\bigr).
\end{equation}
In particular, the boundary removes the independent $v$-sector datum: once $F$ is fixed, $G$ is determined
by the matching map $h=p^{-1}$.

\medskip
\noindent\emph{Group-theoretic matching.}
A conformal transformation $(u,v)\mapsto (g(u),f(v))$ preserves the mirror worldline \eqref{eq:mirror-raytracing} if and only if
\begin{equation}\label{eq:mirror-matching}
	f\circ p = p\circ g,
	\qquad\text{equivalently}\qquad
	g\circ h = h\circ f,
\end{equation}
which is the instance of the matching condition \eqref{eq:matching} for this geometry (with $x^-=u$, $x^+=v$).
Consequently, the symmetry group of the spacetime with the reflecting boundary is reduced from a factorized
$\Diff_{\le}(\mathbb{R})\times \Diff_{\le}(\mathbb{R})$ to the graph subgroup cut out by \eqref{eq:mirror-matching}.

\begin{prop}[Matched reparametrization charges for a Dirichlet mirror]\label{prop:mirror-matched-charges}
Let $(\mathbb{M}^2,ds^2)$ be $1{+}1$ Minkowski space with null coordinates
\[
u:=t-x,\qquad v:=t+x,\qquad ds^2=-\,du\,dv.
\]
Fix a smooth strictly increasing function $p:\mathbb{R}\to\mathbb{R}$ and define the moving-mirror spacetime
\[
\mathcal{M}_p:=\{(u,v)\in\mathbb{R}^2 \mid v\ge p(u)\},
\qquad
\gamma:=\partial\mathcal{M}_p=\{v=p(u)\}.
\]
Consider the classical massless scalar field $\phi$ on $\mathcal{M}_p$ satisfying the Dirichlet boundary condition
\[
\phi|_{\gamma}=0,
\]
and let $T_{ab}$ be the (improved or unimproved) classical stress tensor; in null coordinates one has
$T_{uu}=(\partial_u\phi)^2$ and $T_{vv}=(\partial_v\phi)^2$.

Assume $\phi$ has sufficient decay (e.g., $F'\in L^2(\mathbb{R})$, which holds for finite-energy solutions) so that the following integrals converge, and let $\xi\in C_c^\infty(\mathbb{R})$.
Define the outgoing and incoming chiral charges
\[
Q^+[\xi]\ :=\ \int_{\mathbb{R}} du\,\xi(u)\,T_{uu}(u)\quad\text{on }\mathscr{I}^+_\iota,
\qquad
Q^-[\eta]\ :=\ \int_{\mathbb{R}} dv\,\eta(v)\,T_{vv}(v)\quad\text{on }\mathscr{I}^-_\iota.
\]
Then the mirror ray-tracing relation $v=p(u)$ implies the identity
\begin{equation}\label{eq:Qmatch}
	Q^+[\xi]\ =\ Q^-[\,p_*\xi\,],
	\qquad
	(p_*\xi)(v)\ :=\ \xi(h(v))\,p'(h(v)),
	\qquad h:=p^{-1}.
\end{equation}
Equivalently, the charge associated to the vector field $\xi(u)\partial_u$ on $\mathscr{I}^+_\iota$
coincides with the charge associated to the pushed-forward vector field $(p_*\xi)(v)\partial_v$ on
$\mathscr{I}^-_\iota$.
\end{prop}

\begin{proof}
A general smooth solution of $\Box\phi=0$ on $\mathcal{M}_p$ has the form
\[
\phi(u,v)=F(u)+G(v).
\]
The Dirichlet condition $\phi(u,p(u))=0$ gives $F(u)+G(p(u))=0$, hence (setting $h=p^{-1}$)
\[
G(v)=-F(h(v)).
\]
In particular, on $\mathscr{I}^+_\iota$ one may view the outgoing component as $F(u)$, while on
$\mathscr{I}^-_\iota$ one may view the incoming component as $G(v)$, related by $F(u)=-G(p(u))$.

Differentiating yields $F'(u)=-G'(p(u))\,p'(u)$, hence
\[
T_{uu}(u)=(F'(u))^2=(p'(u))^2\,\bigl(G'(p(u))\bigr)^2=(p'(u))^2\,T_{vv}(p(u)).
\]
Therefore,
\[
Q^+[\xi]=\int du\,\xi(u)\,(p'(u))^2 T_{vv}(p(u)).
\]
Changing variables $v=p(u)$ (so $dv=p'(u)\,du$) gives
\[
Q^+[\xi]=\int dv\,\xi(h(v))\,p'(h(v))\,T_{vv}(v)=Q^-[p_*\xi],
\]
which is \eqref{eq:Qmatch}.
\end{proof}

\begin{remark}[Infinitesimal matching and the $g\circ h=h\circ f$ mechanism]\label{rem:mirror-infinitesimal}
At the group level, a separated conformal transformation $(u,v)\mapsto (g(u),f(v))$ preserves the mirror
$v=p(u)$ iff $f\circ p=p\circ g$.

At the Lie-algebra level, a vector field $X=\xi(u)\partial_u+\eta(v)\partial_v$ preserves the mirror iff
\[
\eta(p(u))=p'(u)\,\xi(u),
\]
i.e.\ $\eta=p_*\xi$. Thus the residual symmetry is a matched graph-component subalgebra, in precise parallel with
the matching constraint \eqref{eq:mirror-matching}.
\end{remark}

\medskip
\noindent\emph{Energy flux and particle creation (heuristic reminder).}
In two-dimensional CFT language one may package the resulting energy flux in terms of the Schwarzian
derivative. For a theory of central charge $c$ (with $c=1$ for a free scalar) one has, for the Minkowski
``in'' vacuum,
\begin{equation}\label{eq:mirror-flux-schwarzian}
	\bigl\langle T_{uu}(u)\bigr\rangle_{\mathscr{I}^+_\iota}
	= -\frac{c}{24\pi}\,\{p(u),u\},
	\qquad
	\{p(u),u\}:=\frac{p'''(u)}{p'(u)}-\frac{3}{2}\Bigl(\frac{p''(u)}{p'(u)}\Bigr)^2,
\end{equation}
so that $\{p,u\}=0$ (affine $p$) implies vanishing flux, while genuinely nonlinear $p$ yields nontrivial
radiation.

\begin{remark}[Conformal interfaces and defect lines]\label{rem:interfaces}
The moving-mirror construction is a special case of a more general principle: \emph{any}
distinguished timelike ``defect'' worldline $\gamma\subset\mathbb{M}^2$ that is treated as a boundary
component enforces the same matching constraint. In null coordinates the defect can be described by
a monotone relation $v=p(u)$ (a ray-tracing / gluing map), and preserving it forces $f\circ p = p\circ g$,
i.e.\ exactly the conjugacy constraint~\eqref{eq:matching}. This covers reflecting boundaries
(mirrors), internal interfaces, and impurity lines where left- and right-movers are glued by a
prescribed relation. In many interface problems one may ``unfold'' (folding trick) so that the interface
becomes a boundary condition for a doubled theory on a half-space, converting the defect into a boundary
component in the completion sense used throughout this paper. From this viewpoint, the directed/non-directed
dichotomy (factorized vs.\ matched reparametrizations) is a geometric avatar of the familiar statement
that conformal interfaces glue chiral sectors.
\end{remark}

\subsection{Cosmological toy models: FLRW horizons as an order-theoretic obstruction to directedness}
Many widely used cosmological geometries are of Friedmann--Lema\^{\i}tre--Robertson--Walker type.
In $1+1$ dimensions one may consider
\[
ds^{2} = -dt^{2} + a^{2}(t)\,d\sigma^{2},
\qquad \sigma\in \mathbb{R}\ \text{or}\ S^{1},
\]
with scale factor $a(t)>0$. Introducing conformal time $\eta$ via $d\eta = dt/a(t)$, the metric becomes
\[
ds^{2} = a^{2}(\eta)\,(-d\eta^{2} + d\sigma^{2}),
\qquad (\eta,\sigma)\in I\times\Sigma,
\]
where $\Sigma=\mathbb{R}$ or $S^{1}$. Since causal relations are conformally invariant, the causal order
is determined by the conformal domain $I\times\Sigma$ and is independent of the scale factor $a$.

\begin{lem}[Causal order in conformal coordinates]\label{lem:FRW-order}
Let $p=(\eta,\sigma)$ and $q=(\eta',\sigma')$ in $I\times\Sigma$ with conformal metric $-d\eta^{2}+d\sigma^{2}$.
Then
\[
p\prec q \iff \eta'\ge \eta\ \text{and}\ d_{\Sigma}(\sigma,\sigma')\le \eta'-\eta,
\]
and similarly $p\ll q$ iff $\eta'>\eta$ and $d_{\Sigma}(\sigma,\sigma')<\eta'-\eta$.
\end{lem}

\begin{proof}
The conformal metric $-d\eta^2+d\sigma^2$ is locally isometric to $1{+}1$ Minkowski space; future-directed
causal curves satisfy $|d\sigma/d\eta|\le 1$, so $p\prec q$ requires $\eta'\ge\eta$ and the accumulated
spatial displacement $d_\Sigma(\sigma,\sigma')$ to be at most $\eta'-\eta$ (the light-travel budget).
The chronological relation $p\ll q$ corresponds to strict inequalities (the curve can be taken timelike).
\end{proof}

\medskip
\noindent The following proposition shows that finite conformal-time endpoints obstruct directedness.

\begin{prop}\label{prop:FRW-directedness}
Assume $\operatorname{diam}(\Sigma)>0$ (in particular for $\Sigma=\mathbb{R}$ or $S^{1}$).
Write $I=(\eta_{-},\eta_{+})$ with $\eta_{\pm}\in\mathbb{R}\cup\{\pm\infty\}$.
\begin{enumerate}
	\item If $\eta_{+}<+\infty$, then $I\times\Sigma$ is not upward-directed.
	\item If $\eta_{-}>-\infty$, then $I\times\Sigma$ is not downward-directed.
\end{enumerate}
Consequently, the model is directed (upward and downward) only if $I$ is unbounded to the future and to the past.
\end{prop}

\begin{proof}
We prove (1); (2) is analogous. Assume $\eta_{+}<\infty$ and choose $\eta_{0}\in I$ so that
$2(\eta_{+}-\eta_{0})<\operatorname{diam}(\Sigma)$. Choose $\sigma_{0},\sigma_{1}\in\Sigma$ with
$d_{\Sigma}(\sigma_{0},\sigma_{1})>2(\eta_{+}-\eta_{0})$ and set $p=(\eta_{0},\sigma_{0})$ and
$q=(\eta_{0},\sigma_{1})$. If $r=(\eta_{r},\sigma_{r})$ were a common future of $p$ and $q$,
Lemma~\ref{lem:FRW-order} would give
$d_{\Sigma}(\sigma_{0},\sigma_{r})\le \eta_{r}-\eta_{0}$ and $d_{\Sigma}(\sigma_{1},\sigma_{r})\le \eta_{r}-\eta_{0}$.
By the triangle inequality,
\[
d_{\Sigma}(\sigma_{0},\sigma_{1})\le d_{\Sigma}(\sigma_{0},\sigma_{r}) + d_{\Sigma}(\sigma_{r},\sigma_{1})
\le 2(\eta_{r}-\eta_{0})\le 2(\eta_{+}-\eta_{0}),
\]
contradicting the choice of $\sigma_{0},\sigma_{1}$. Hence no such $r$ exists, so the spacetime is not upward-directed.
\end{proof}

In the non-compact case $\Sigma=\mathbb{R}$, if $\eta$ ranges over all of $\mathbb{R}$ (no conformal-time endpoints),
then the model is directed. If in addition the spacetime is globally hyperbolic (as in the standard FLRW setting),
Theorem~\ref{main} implies that it is causally isomorphic to $\mathbb{M}^{2}$; hence in this regime the induced
boundary symmetry pattern is the factorized one. If instead $\eta$ has a finite past or future endpoint (as in
big-bang/big-crunch type models, or finite conformal-time future as in inflationary horizons), then directedness fails
by Proposition~\ref{prop:FRW-directedness}, and the allowable boundary symmetry action is reduced accordingly, through graph-component matching or the general boundary-stabilizer upper bounds of Theorems~\ref{main-1} and~\ref{thm:2d-aut-class}.

\begin{example}[$1{+}1$ FLRW models with finite conformal time]\label{ex:FLRW-nondirected}
For any $1{+}1$ FLRW model with $\Sigma=\mathbb{R}$ and a finite conformal-time endpoint (e.g.\ the
$1{+}1$ de Sitter slab with $I=(-\pi/2,\pi/2)$), Proposition~\ref{prop:FRW-directedness} shows that
directedness fails, placing the spacetime under the non-directed boundary-stabilizer alternatives of Theorems~\ref{main-1}
and~\ref{thm:2d-aut-class}. In null coordinates $x^{\pm}=\eta\pm\sigma$, the conformal-time boundaries
$\eta=\mathrm{const}$ become lines $x^{-}=-x^{+}+c$ (decreasing functions in null coordinates, i.e.\
spacelike segments of the embedding boundary $\partial_\iota\mathcal{M}$). The precise identification
of which case applies---and the explicit automorphism group---depends on the detailed structure of
$\partial_\iota\mathcal{M}$ for each model and is deferred to future work.
\end{example}

These examples illustrate the main physical meaning of the symmetry descriptions obtained in Section~4: directed spacetimes give rise to factorized null reparametrization sectors, while finite conformal-time endpoints or boundary/defect data force matched or reduced symmetry actions. Further variants arise by prescribing additional boundary data or by iterating the matching mechanism along chains of null interfaces, but we do not pursue these directions here.

\section{Examples}  \label{Examples}
\begin{prop}[Einstein static cylinder]\label{prop:Einstein-cylinder-groups}
	Let
	\[
	\mathrm{Cyl}:=\mathbb R_t\times S^1_\theta,
	\qquad
	g=-dt^2+d\theta^2,
	\]
	with universal cover $\pi:\mathbb M^2\to \mathrm{Cyl}$ given by
	\[
	\pi(t,x)=(t,[x]).
	\]
	In null coordinates $x^\pm=t\pm x$ on $\mathbb M^2$, the deck generator is
	\[
	\gamma(x^+,x^-)=(x^+ + 2\pi,\; x^- - 2\pi).
	\]
	Set
	\[
	\widetilde{\Homeo}_+(S^1)
	:=\{f\in \Homeo_{\le}(\mathbb R): f(u+2\pi)=f(u)+2\pi\},
	\]
	\[
	\widetilde{\Diff}_+(S^1)
	:=\{f\in \Diff_{\le}(\mathbb R): f(u+2\pi)=f(u)+2\pi\},
	\]
	and let $\delta(u)=u+2\pi$.
	Then
	\[
	\Aut(\mathrm{Cyl})
	\cong
	\Bigl(\widetilde{\Homeo}_+(S^1)\times \widetilde{\Homeo}_+(S^1)\Bigr)
	/\langle (\delta,\delta^{-1})\rangle
	\rtimes S_2,
	\]
	and
	\[
	\Conf(\mathrm{Cyl})
	\cong
	\Bigl(\widetilde{\Diff}_+(S^1)\times \widetilde{\Diff}_+(S^1)\Bigr)
	/\langle (\delta,\delta^{-1})\rangle
	\rtimes D_2.
	\]
\end{prop}

\begin{proof}
	By Theorems~\ref{Kim-2} and~\ref{thm:2d-aut-class}, it suffices to compute the normalizers
	of the deck group $\langle\gamma\rangle$ in $\Aut(\mathbb M^2)$ and $\Conf(\mathbb M^2)$.
	
	For the causal group, every element is either of the form
	\[
	(x^+,x^-)\mapsto (f(x^+),g(x^-))
	\]
	or
	\[
	(x^+,x^-)\mapsto (g(x^-),f(x^+)),
	\]
	with $f,g\in\Homeo_{\le}(\mathbb R)$.
	If $\Phi(x^+,x^-)=(f(x^+),g(x^-))$, then
	\[
	\Phi\gamma\Phi^{-1}(u,v)
	=
	\bigl(f(f^{-1}(u)+2\pi),\, g(g^{-1}(v)-2\pi)\bigr).
	\]
	Thus $\Phi$ normalizes $\langle\gamma\rangle$ iff
	\[
	f(s+2\pi)=f(s)+2\pi,
	\qquad
	g(s+2\pi)=g(s)+2\pi,
	\]
	i.e.\ iff $f,g\in\widetilde{\Homeo}_+(S^1)$.
	The null-coordinate swap conjugates $\gamma$ to $\gamma^{-1}$, so
	\[
	N_{\Aut(\mathbb M^2)}(\langle\gamma\rangle)
	=
	\bigl(\widetilde{\Homeo}_+(S^1)\times \widetilde{\Homeo}_+(S^1)\bigr)\rtimes S_2.
	\]
	Since $\langle\gamma\rangle=\langle(\delta,\delta^{-1})\rangle$, quotienting gives the formula for
	$\Aut(\mathrm{Cyl})$.
	
	The conformal case is identical, replacing $\Homeo_{\le}(\mathbb R)$ by $\Diff_{\le}(\mathbb R)$,
	and using Theorem~\ref{main-2} to include the discrete $D_2$-symmetries. Hence
	\[
	N_{\Conf(\mathbb M^2)}(\langle\gamma\rangle)
	=
	\bigl(\widetilde{\Diff}_+(S^1)\times \widetilde{\Diff}_+(S^1)\bigr)\rtimes D_2,
	\]
	and quotienting by $\langle(\delta,\delta^{-1})\rangle$ yields the formula for
	$\Conf(\mathrm{Cyl})$.
\end{proof}
\begin{prop}[Global de Sitter cylinder]\label{prop:dS2-groups}
	Let
	\[
	dS_2:=\Bigl((-\tfrac{\pi}{2},\tfrac{\pi}{2})_\eta\times S^1_\theta,\;
	g=\sec^2\eta\,(-d\eta^2+d\theta^2)\Bigr),
	\]
	where $\theta$ is $2\pi$-periodic. Then
	\[
	\Aut(dS_2)\cong \Homeo_+(S^1)\rtimes S_2,
	\qquad
	\Conf(dS_2)\cong \Diff_+(S^1)\rtimes D_2.
	\]
\end{prop}

\begin{proof}
	Let $\widetilde{dS}_2$ denote the universal cover, with coordinates
	\[
	(\eta,x)\in \Bigl(-\tfrac{\pi}{2},\tfrac{\pi}{2}\Bigr)\times \mathbb R,
	\qquad
	ds^2=\sec^2\eta\,(-d\eta^2+dx^2).
	\]
	In null coordinates
	\[
	u:=\eta+x,\qquad v:=\eta-x,
	\]
	one has
	\[
	ds^2=-\sec^2\!\Bigl(\frac{u+v}{2}\Bigr)\,du\,dv,
	\]
	and the domain is the strip
	\[
	|u+v|<\pi.
	\]
	The deck generator of the covering $\widetilde{dS}_2\to dS_2$ is
	\[
	\tau(u,v)=(u+2\pi,\;v-2\pi).
	\]
	
	\smallskip
	\noindent\emph{Conformal group.}
	Let $\Phi$ be an orientation-preserving conformal diffeomorphism of $\widetilde{dS}_2$.
	Since the metric is conformal to the Minkowski metric on the strip, $\Phi$ has separated form
	\[
	\Phi(u,v)=(f(u),g(v)),
	\qquad
	f,g\in \Diff_{\le}(\mathbb R).
	\]
	The two boundary components of the strip are
	\[
	B_+:\ v=-u+\pi,
	\qquad
	B_-:\ v=-u-\pi.
	\]
	Preserving $B_+$ gives
	\[
	g(-u+\pi)=-f(u)+\pi,
	\]
	while preserving $B_-$ gives
	\[
	g(-u-\pi)=-f(u)-\pi.
	\]
	Comparing these two identities yields
	\[
	f(u+2\pi)=f(u)+2\pi.
	\]
	Conversely, every $f\in\widetilde{\Diff}_+(S^1)$ determines a unique $g$ by
	\[
	g(v)=\pi-f(\pi-v),
	\]
	and the periodicity condition on $f$ ensures that both $B_+$ and $B_-$ are preserved.
	Thus the connected orientation-preserving conformal lift group is naturally identified with
	\[
	\widetilde{\Diff}_+(S^1)
	:=\{f\in\Diff_{\le}(\mathbb R): f(u+2\pi)=f(u)+2\pi\}.
	\]
	
	The deck generator $\tau$ corresponds to the element $\delta(u)=u+2\pi$, so quotienting by
	$\langle\tau\rangle$ gives
	\[
	\Conf_0(dS_2)\cong \widetilde{\Diff}_+(S^1)/\langle\delta\rangle
	\cong \Diff_+(S^1).
	\]
	In addition, the involutions
	\[
	s(u,v)=(v,u),
	\qquad
	t(u,v)=(-v,-u)
	\]
	preserve the strip and normalize $\langle\tau\rangle$; together with their product they generate
	a copy of $D_2$. Hence
	\[
	\Conf(dS_2)\cong \Diff_+(S^1)\rtimes D_2.
	\]
	
	\smallskip
	\noindent\emph{Causal group.}
	The same argument applies to causal automorphisms, replacing
	$\Diff_{\le}(\mathbb R)$ by $\Homeo_{\le}(\mathbb R)$. Thus the connected causal lift group is
	\[
	\widetilde{\Homeo}_+(S^1)
	:=\{f\in\Homeo_{\le}(\mathbb R): f(u+2\pi)=f(u)+2\pi\},
	\]
	and descending to the quotient yields
	\[
	\Aut_0(dS_2)\cong \Homeo_+(S^1).
	\]
	Among the above discrete symmetries, only the null-coordinate swap $s(u,v)=(v,u)$ is causal;
	the time-reversing involution $t$ is anti-causal. Therefore
	\[
	\Aut(dS_2)\cong \Homeo_+(S^1)\rtimes S_2.
	\]
\end{proof}
\section{Conclusion} \label{conc}

We made the order-theoretic back-and-forth mechanism for causal isomorphisms explicit in the
$1+1$-dimensional setting by refining the finite language with the two intrinsic null orders. This yields a
partial classification of causal and conformal automorphism groups of two-dimensional globally
hyperbolic spacetimes, including the directed/non-directed and compact/non-compact cases. In the
directed non-compact case one recovers the Minkowski plane up to causal isomorphism; in the
remaining cases the automorphism groups are described by stabilizer and normalizer-quotient
constructions.

On the physics side, the symmetry description can be read as a factorized-versus-matched alternative for
large reparametrization actions on null-type completion boundaries. Directed spacetimes give
independent actions on the two null directions, whereas non-directed spacetimes force graph-type
matching relations or stabilizer reductions. The moving-mirror, interface, and FLRW examples show that this is a common geometric mechanism rather than a collection of unrelated illustrations. The compact normalizer-quotient cases remain partially implicit in general. Section~\ref{Examples}
shows, however, that in the cases of the Einstein static cylinder and global de Sitter space these quotients can be computed explicitly, yielding circle-reparametrization-type groups in the compact setting. Extending such explicit computations to further compact examples, and extending these results beyond $1+1$ dimensions, will likely require additional
structure beyond the binary causal relations used here.

\section*{Acknowledgements}
I would like to thank an anonymous referee for constructive comments that helped improve the manuscript.
During the revision of this paper, the author used generative AI tools, including Claude and ChatGPT, for editorial assistance, language polishing, organizational restructuring, and expository checks. All mathematical statements, proofs, and final revisions were checked by the author, who takes full responsibility for the content of the paper.
\appendix
\section{Two-dimensional slice geometry}\label{app:slice-geometry}

Throughout this appendix, $(\mathcal M,g)$ denotes a connected, time-oriented, $2$-dimensional
globally hyperbolic spacetime with non-compact Cauchy surfaces. In particular, $\mathcal M$ is a smooth
Lorentzian manifold without boundary.

\subsection{Global splitting and a spatial coordinate}

By global hyperbolicity, there exists a smooth Cauchy temporal function $t:\mathcal M\to\mathbb R$
whose level sets $\Sigma_s:=t^{-1}(s)$ are smooth spacelike Cauchy hypersurfaces, and a diffeomorphism
\[
\Phi:\mathcal M \longrightarrow \mathbb R \times \Sigma_0,\qquad
\Phi(p) = (t(p),\pi(p)),
\]
where $\pi:\mathcal M\to\Sigma_0$ is the projection onto the reference Cauchy surface $\Sigma_0:=\Sigma_{0}$
(see \cite{BernalSanchez,BernalSanchez2}).
Since $\Sigma_0$ is a connected non-compact $1$-manifold, fix once and for all a diffeomorphism
$\chi:\Sigma_0\to\mathbb R$ and define the spatial coordinate
\[
x:\mathcal M\to\mathbb R,\qquad x(p):=\chi(\pi(p)).
\]
Thus $\Phi$ identifies $\mathcal M$ with global coordinates $(t,x)\in\mathbb R^2$, and for each $s$,
the restriction $\pi|_{\Sigma_s}:\Sigma_s\to\Sigma_0$ is a diffeomorphism, so $x|_{\Sigma_s}$ identifies
each slice $\Sigma_s$ with $\mathbb R$ and induces an order on $\Sigma_s$:
for $p,q\in\Sigma_s$, write $p<q$ iff $x(p)<x(q)$.

\subsection{Causal curves meet Cauchy slices exactly once}

\begin{lem}\label{lem:temporal-monotone}
If $\gamma$ is a future-directed causal curve, then $t\circ\gamma$ is strictly increasing.
\end{lem}

\begin{proof}
Since $t$ is temporal, $\nabla t$ is timelike. Equivalently, $dt(v)>0$ for every nonzero future-directed
causal vector $v$. Along a future-directed causal curve $\gamma$, we have $(t\circ\gamma)' = dt(\dot\gamma)>0$.
\end{proof}

\begin{lem}\label{lem:meets-slice-once}
Let $\gamma:I\to\mathcal M$ be an inextendible future-directed causal curve. Then for every $s\in\mathbb R$,
$\gamma$ meets $\Sigma_s$ in exactly one point.
\end{lem}

\begin{proof}
Uniqueness: by Lemma~\ref{lem:temporal-monotone}, $t\circ\gamma$ is strictly increasing, so it attains the value
$s$ at most once.

Existence: since $\Sigma_s$ is a Cauchy surface, every inextendible causal curve meets $\Sigma_s$ (by definition
of Cauchy surface / Cauchy temporal function).
\end{proof}

\subsection{A small-time localization estimate}

We will use the following standard estimate: future-directed causal curves cannot travel far (in an auxiliary
Riemannian metric) when the temporal time $t$ increases only a little.

\begin{lem}\label{lem:small-time-localization}
Fix a point $p\in\mathcal M$ and an open neighborhood $U\ni p$. Then there exists $\varepsilon>0$ such that
\[
J^+(p)\cap t^{-1}\bigl([t(p),\,t(p)+\varepsilon]\bigr)\subset U.
\]
\end{lem}

\begin{proof}
Choose a smooth Riemannian metric $h$ on $\mathcal M$. Pick an open neighborhood $W\ni p$ with compact closure
$\overline W\subset U$. Set $d:=\mathrm{dist}_h(p,\mathcal M\setminus W)>0$.

For each $q\in \overline W$, let
\[
K_q := \{v\in T_q\mathcal M:\ v \text{ future causal and } \|v\|_h=1\}.
\]
This set is compact and $dt_q(v)>0$ on $K_q$, hence $m(q):=\min_{v\in K_q}dt_q(v)>0$.
By compactness of $\overline W$, $m_0:=\min_{q\in\overline W}m(q)>0$. Thus for every future causal
$v\in T_q\mathcal M$ with $q\in\overline W$,
\begin{equation}\label{eq:dt-lower-bound}
	dt_q(v)\ge m_0\|v\|_h.
\end{equation}

Now let $r\in J^+(p)$ and choose a future-directed causal curve $\gamma$ from $p$ to $r$.
If $\gamma$ exits $W$, let $q$ be its first exit point. Parameterize $\gamma$ by $h$-arc length on the segment
inside $\overline W$. Then \eqref{eq:dt-lower-bound} gives
\[
t(q)-t(p)=\int dt(\dot\gamma)\ge \int m_0\|\dot\gamma\|_h = m_0\, L_h(\gamma|_{[p,q]})\ge m_0\, d.
\]
Hence if $t(r)-t(p)<m_0 d$, $\gamma$ cannot exit $W$, so $r\in W\subset U$.
Take $\varepsilon:=m_0 d$.
\end{proof}

\subsection{Null generators of \texorpdfstring{$\partial I^{+}(a)$}{boundary of I+(a)} in dimension 2}

We record a standard causality fact in the special case of futures of points.

\begin{lem}\label{lem:null-generator}
Let $a\in\mathcal M$ and let $q\in\partial I^+(a)\setminus\{a\}$. Then there exists a future-directed
null segment $\gamma$ from $a$ to $q$ such that $\gamma\subset\partial I^+(a)$.
\end{lem}

\begin{proof}[Proof sketch]
Choose a sequence $q_n\in I^+(a)$ converging to $q$, and for each $n$ choose a future-directed timelike curve
from $a$ to $q_n$. By the limit curve theorem in globally hyperbolic spacetimes, a subsequence converges to a
future-directed causal curve from $a$ to $q$, hence $q\in J^+(a)$.
By global hyperbolicity there exists a length-maximizing causal curve $\gamma$ from $a$ to $q$, and any such
maximizer is a causal geodesic (see \cite[Theorem~3.18]{BeemEhrlich}). Since $q\notin I^+(a)$, $\gamma$ cannot be timelike,
hence $\gamma$ is null. If an interior point of $\gamma$ lies in $I^+(a)$, then by the push-up property we obtain
$q\in I^+(a)$, a contradiction. Therefore $\gamma\subset \partial I^+(a)$.
\end{proof}

In dimension $2$ the future null cone at a point has exactly two rays, hence $\partial I^+(a)$ has exactly two
future-directed null generators issuing from $a$.

\begin{lem}\label{lem:two-generators}
For each $a\in\mathcal M$ there exist exactly two inextendible future-directed null geodesics
$\gamma_a^{-},\gamma_a^{+}\subset \partial I^+(a)$ starting at $a$ such that
\[
\partial I^+(a)\setminus\{a\} = \gamma_a^{-}\ \dot\cup\ \gamma_a^{+}.
\]
Moreover, $\gamma_a^{-}\cap\gamma_a^{+}=\{a\}$.
\end{lem}

\begin{proof}
In a convex normal neighborhood of $a$, $\partial I^+(a)$ consists of exactly two null rays emanating from $a$.
Extend those rays as maximal (inextendible) null geodesics $\gamma_a^{-},\gamma_a^{+}$.

Let $q\in\partial I^+(a)\setminus\{a\}$. By Lemma~\ref{lem:null-generator}, there is a null geodesic segment from $a$
to $q$ contained in $\partial I^+(a)$. Its initial tangent at $a$ must be one of the two future null directions, so by
uniqueness of geodesics it is contained in $\gamma_a^{-}$ or $\gamma_a^{+}$. Hence
$\partial I^+(a)\setminus\{a\}\subset \gamma_a^{-}\cup\gamma_a^{+}$.

We next show $\gamma_a^{-}\cap\gamma_a^{+}=\{a\}$. Suppose not, and let $q\neq a$ be an
intersection point of minimal parameter along $\gamma_a^{+}$. Before $q$ neither generator
coincides with the other, and null geodesics in $1+1$ dimensions have no conjugate points;
hence neither segment $\gamma_a^{\pm}|_{[a,q]}$ has a cut point before $q$, so both lie in
$\partial I^+(a)$, and so does their common endpoint $q$ by closedness of $\partial I^+(a)$.
The two segments are distinct null geodesics from $a$ to $q$, so the bigon they bound contains
a timelike curve from $a$ to $q$ (push-up inside the diamond between the two null curves),
giving $q\in I^+(a)$ and contradicting $q\in\partial I^+(a)$. Hence $\gamma_a^{-}\cap\gamma_a^{+}=\{a\}$.

By the standard null cut theorem, if one of the future null generators were to leave
$\partial I^+(a)$, then before or at the first exit there would be either a null
conjugate point or a second null geodesic from $a$ to the same point. In $1+1$
dimensions there are no transverse Jacobi fields along a null geodesic, hence no
null conjugate points; the second alternative has just been ruled out by the null-bigon
argument. Therefore each of $\gamma_a^{-},\gamma_a^{+}$ remains on $\partial I^+(a)$ for all
future parameter values. Combined with
$\partial I^+(a)\setminus\{a\}\subset\gamma_a^{-}\cup\gamma_a^{+}$, this gives
$\partial I^+(a)\setminus\{a\}=\gamma_a^{-}\ \dot\cup\ \gamma_a^{+}$.
\end{proof}

\subsection{Future intervals on slices and continuity of endpoints}

For $a\in\mathcal M$ and $s>t(a)$, define the two boundary points on $\Sigma_s$ by
\[
L_a(s):=\gamma_a^{-}\cap \Sigma_s,\qquad R_a(s):=\gamma_a^{+}\cap \Sigma_s,
\]
which are well-defined by Lemma~\ref{lem:meets-slice-once}. Define the endpoint functions
\[
\ell_a(s):=x(L_a(s)),\qquad r_a(s):=x(R_a(s)).
\]

\begin{lem}\label{lem:slice-interval}
For every $a\in\mathcal M$ and every $s>t(a)$,
\[
I^+(a)\cap \Sigma_s \;=\; \{p\in\Sigma_s:\ \ell_a(s)<x(p)<r_a(s)\}.
\]
In particular, $I^+(a)\cap \Sigma_s$ is an open interval in $\Sigma_s\simeq\mathbb R$ with endpoints $L_a(s),R_a(s)$.
\end{lem}

\begin{proof}
The set $I^+(a)\cap\Sigma_s$ is open in $\Sigma_s$ and nonempty (follow a future timelike curve from $a$ until it hits
$\Sigma_s$). The two points $L_a(s),R_a(s)$ lie in $\partial I^+(a)\cap\Sigma_s$, hence in the boundary of
$I^+(a)\cap\Sigma_s$.

We claim that $\partial I^+(a)\cap\Sigma_s=\{L_a(s),R_a(s)\}$. Indeed, any $q\in\partial I^+(a)\cap\Sigma_s$ lies on
$\partial I^+(a)$ and therefore belongs to $\gamma_a^{-}$ or $\gamma_a^{+}$ by Lemma~\ref{lem:two-generators}; since each
$\gamma_a^{\pm}$ meets $\Sigma_s$ exactly once, $q$ must be $L_a(s)$ or $R_a(s)$.

Thus $I^+(a)\cap\Sigma_s\subset\Sigma_s\simeq\mathbb R$ is a nonempty open subset whose boundary consists of exactly two
points. In a $1$-manifold, this forces $I^+(a)\cap\Sigma_s$ to be an open interval with those boundary points as endpoints,
i.e. the set of $p\in\Sigma_s$ with $\ell_a(s)<x(p)<r_a(s)$.
\end{proof}

\begin{lem}\label{lem:endpoints-continuous}
For each $a\in\mathcal M$, the functions $s\mapsto \ell_a(s)$ and $s\mapsto r_a(s)$ are continuous on $(t(a),\infty)$.
\end{lem}

\begin{proof}
We prove continuity for $\ell_a$; the proof for $r_a$ is identical.
Consider the generator $\gamma_a^{-}$. By Lemma~\ref{lem:temporal-monotone}, $t\circ\gamma_a^{-}$ is strictly increasing;
by Lemma~\ref{lem:meets-slice-once}, its image is $(t(a),\infty)$. Hence $t\circ\gamma_a^{-}$ is a homeomorphism onto
$(t(a),\infty)$, with continuous inverse.
Define $\widetilde\gamma(s):=\gamma_a^{-}\bigl((t\circ\gamma_a^{-})^{-1}(s)\bigr)\in\Sigma_s$. Then $\widetilde\gamma$
is continuous and $\ell_a(s)=x(\widetilde\gamma(s))$ is a composition of continuous maps.
\end{proof}

\bibliographystyle{amsplain}

\begin{thebibliography}{99}
\bibitem{AvisIshamStorey}
S.~J.~Avis, C.~J.~Isham and D.~Storey,
\textit{Quantum field theory in anti-de Sitter space-time},
Phys. Rev. D \textbf{18} (1978) 3565--3576,
doi:10.1103/PhysRevD.18.3565.

\bibitem{BeemEhrlich}
J.~K.~Beem, P.~E.~Ehrlich and K.~L.~Easley,
\textit{Global Lorentzian Geometry},
2nd ed., Marcel Dekker, New York, 1996.

\bibitem{BernalSanchez}
A.~N.~Bernal and M.~S\'anchez,
\textit{On smooth Cauchy hypersurfaces and Geroch's splitting theorem},
Comm. Math. Phys. \textbf{243} (2003), 461--470,
doi:10.1007/s00220-003-0982-6.

\bibitem{BernalSanchez2}
A.~N.~Bernal and M.~S\'anchez,
\textit{Smoothness of time functions and the metric splitting of globally hyperbolic spacetimes},
Comm. Math. Phys. \textbf{257} (2005), 43--50,
doi:10.1007/s00220-005-1346-1.

\bibitem{BondiMetzner}
H.~Bondi, M.~G.~J.~van der Burg and A.~W.~K.~Metzner,
\textit{Gravitational waves in general relativity. VII. Waves from axisymmetric isolated systems},
Proc. Roy. Soc. Lond. A \textbf{269} (1962) 21--52,
doi:10.1098/rspa.1962.0161.

\bibitem{CarlitzWilley}
R.~D.~Carlitz and R.~S.~Willey,
\textit{Reflections on moving mirrors},
Phys. Rev. D \textbf{36} (1987) 2327--2335,
doi:10.1103/PhysRevD.36.2327.

\bibitem{FullingDavies}
S.~A.~Fulling and P.~C.~W.~Davies,
\textit{Radiation from a moving mirror in two-dimensional space-time: conformal anomaly},
Proc. Roy. Soc. Lond. A \textbf{348} (1976) 393--414,
doi:10.1098/rspa.1976.0045.

\bibitem{GM}
A.~Garc{\'{\i}}a-Parrado G{\'{o}}mez-Lobo and E.~Minguzzi,
\textit{Product posets and causal automorphisms of the plane},
Class. Quantum Grav. \textbf{28} (2011), no.~14, 147001,
doi:10.1088/0264-9381/28/14/147001.

\bibitem{H}
S.~W.~Hawking, A.~R.~King and P.~J.~McCarthy,
\textit{A new topology for curved space-time which incorporates the causal, differential, and conformal structures},
J. Math. Phys. 17 (1976) 174--181.

\bibitem{IshibashiWaldAdS}
A.~Ishibashi and R.~M.~Wald,
\textit{Dynamics in non-globally-hyperbolic static spacetimes: III. Anti-de Sitter spacetime},
Class. Quantum Grav. \textbf{21} (2004) 2981--3014,
doi:10.1088/0264-9381/21/12/012,
arXiv:hep-th/0402184.

\bibitem{K}
D.-H.~Kim,
\textit{Causal and conformal structures of two-dimensional globally hyperbolic spacetimes},
J. Geom. Phys. 96 (2015) 146--154,
 doi:10.1016/j.geomphys.2015.06.015.

\bibitem{K2}
D.-H.~Kim,
\textit{A classification of two-dimensional spacetimes with non-compact Cauchy surfaces},
J. Geom. Phys. 73 (2013) 252--259,
 doi:10.1016/j.geomphys.2013.07.002.

\bibitem{KP}
E.~H.~Kronheimer and R.~Penrose,
\textit{On the structure of causal spaces},
Math. Proc. Cambridge Philos. Soc. \textbf{63} (1967), 481--501,
doi:10.1017/S030500410004144X.

\bibitem{KunzingerSaemann}
M.~Kunzinger and C.~S\"amann,
\textit{Lorentzian length spaces},
Ann. Global Anal. Geom. \textbf{54} (2018), no.~3, 399--447,
doi:10.1007/s10455-018-9633-1.

\bibitem{L}
A.~V.~Levichev,
\textit{Prescribing the conformal geometry of a Lorentz manifold by means of its causal structure},
Sov. Math. Dokl. 35 (1987) 452--455.

\bibitem{M}
D.~Malament,
\textit{The class of continuous timelike curves determines the topology of spacetime},
J. Math. Phys. \textbf{18} (1977) 1399--1404,
doi:10.1063/1.523436.

\bibitem{MP}
K.~Martin and P.~Panangaden,
\textit{A domain of spacetime intervals in general relativity},
Comm. Math. Phys. \textbf{267} (2006), no.~3, 563--586,
doi:10.1007/s00220-006-0066-5.

\bibitem{CelestialCFT}
S.~Pasterski, S.-H.~Shao and A.~Strominger,
\textit{Flat Space Amplitudes and Conformal Symmetry of the Celestial Sphere},
Phys. Rev. D \textbf{96} (2017) 065026,
arXiv:1701.00049.

\bibitem{Sachs}
R.~K.~Sachs,
\textit{Gravitational waves in general relativity. VIII. Waves in asymptotically flat space-time},
Proc. Roy. Soc. Lond. A \textbf{270} (1962) 103--126,
doi:10.1098/rspa.1962.0206.

\bibitem{StromingerReview}
A.~Strominger,
\textit{Lectures on the Infrared Structure of Gravity and Gauge Theory},
Princeton University Press (2018),
arXiv:1703.05448.

\bibitem{Z}
E.~C.~Zeeman,
\textit{Causality Implies the Lorentz Group},
J. Math. Phys. 5 (1964) 490--493,
 doi:10.1063/1.1704140.
\end{thebibliography}

\end{document}